\numberwithin{equation}{section}
\newtheorem{thm}{Theorem}[section]   
\newtheorem{lem}[thm]{Lemma}
\newtheorem{prop}[thm]{Proposition}
\newtheorem{cor}[thm]{Corollary}
\theoremstyle{definition}
\newtheorem{defn}[thm]{Definition}
\newcommand\SE{\operatorname{SE}}
\newcommand\QQ{\mathbb{Q}}
\newcommand{\ZZ}{\mathbb{Z}}
\newcommand\LL{\mathcal{L}}
\newcommand\Mot{\operatorname{Mot}}
\newcommand\TRIL[1]{
  \def\X{#1}
  \foreach \i in {0,...,\X}
  {
    \draw[gray,very thin] (\i,0) -- (\i,\i);
    \draw[gray,very thin] (\i,\i) -- (\X,\i);
  }
}
\newcommand\TRIDL[1]{
\def\X{#1}
    \foreach \i in {1,...,\X}
  {
    \draw[gray,very thin] (\i,\X) -- (\i,\X-\i);
    \draw[gray,very thin] (\i,\X -\i) -- (\X,\X-\i);
  }
  }
\newcommand\LHLL[2]{
  \def\X{#1} \def\Y{#2}
  \foreach \i in {0,...,\X}
  {
\pgfmathsetmacro{\m}{\Y};
    \draw[gray,very thin] (\i,0) -- (\i,\m);
  }
\pgfmathsetmacro{\m}{\Y-1};
  \foreach \x in {1,...,\X}
{ \foreach \j in {0,...,\m}
  {
\draw[gray,very thin] (\x-1,\j) -- (\x,\j);
}}
\pgfmathsetmacro{\m}{\Y-1};
 \foreach \i in {2,...,\X}
 \foreach \j in {0,...,\m} 
{\foreach \y in {2,...,\i}
{
\pgfmathsetmacro{\w}{\j+(\y-1)/\i};
\pgfmathsetmacro{\z}{\j+(\y-1)/(\i+1)};
\draw[gray,very thin] (\i-1,\w) -- (\i,\z);
}}
}
\newcommand\DLHLL[2]{
  \def\X{#1} \def\Y{#2}
  \foreach \i in {0,...,\X}
  {
\pgfmathsetmacro{\m}{\Y};
    \draw[gray,very thin] (\i,0) -- (\i,\m);
  }
\pgfmathsetmacro{\m}{\Y};
  \foreach \x in {1,...,\X}
{ \foreach \j in {0,...,\m}
  {
}}
\pgfmathsetmacro{\m}{\Y-1};
 \foreach \i in {1,...,\X}
 \foreach \j in {0,...,\m} 
{\foreach \y in {1,...,\i}
{
\pgfmathsetmacro{\w}{\j+(\y-1)/\i};
\pgfmathsetmacro{\z}{\j+(\y)/(\i+1)};
\draw[gray,very thin] (\i-1,\w) -- (\i,\z);
}}
}
\newcommand\SLHlabel[1]{
\def\X{#1}
  \foreach \i in {0,...,\X}
  {
    \node at (\i+0.05,.9) {\i};
  }
  \node at (-.3,0) {0};
  \node at (-.3,1) {1};
  \node at (-.3,2) {2};
}
\newcommand\SLH[1]{
\def\X{#1}
\foreach \i in {0,...,\X}
  {
  \pgfmathsetmacro{\m}{2};
  \draw[gray,very thin] (\i,0) -- (\i,\m);
  }
  \draw[gray,very thin] (0,1) -- (\X,1);

\pgfmathsetmacro{\m}{1};
\foreach \i in {2,...,\X}
\foreach \j in {1,...,\m} 
{\foreach \y in {2,...,\i}
{
\pgfmathsetmacro{\w}{\j+(\y-1)/\i};
\pgfmathsetmacro{\z}{\j+(\y-1)/(\i+1)};
\draw[gray,very thin] (\i-1,\w) -- (\i,\z);
}}

\pgfmathsetmacro{\m}{0}; 
\foreach \i in {1,...,\X}
\foreach \j in {0,...,\m}
{\foreach \y in {1,...,\i}
   {
   \pgfmathsetmacro{\w}{\j+(\y-1)/\i};
   \pgfmathsetmacro{\z}{\j+(\y)/(\i+1)};
   \draw[gray,very thin] (\i-1,\w) -- (\i,\z);
}}
}
\title{Quantum interpretation of lattice paths}
 \author{Bhargavi Jonnadula}
 \address{Mathematical Institute, University of Oxford, Oxford OX2 6GG, UK}
 \email{bhargavi.jonnadula@maths.ox.ac.uk}
  \author{Jonathan P. Keating}
 \address{Mathematical Institute, University of Oxford, Oxford OX2 6GG, UK}
 \email{keating@maths.ox.ac.uk}
\begin{document}

\begin{abstract}
In the 1980s, Viennot \cite{ViennotLN,ViennotOP} developed a combinatorial approach to studying mixed moments of orthogonal polynomials using Motzkin paths.  Recently, an alternative combinatorial model for these mixed moments based on lecture hall paths was introduced in \cite{Corteel2023}. For sequences of orthogonal polynomials, we establish here a bijection between the Motzin paths and the lecture hall paths via a novel symmetric lecture hall graph.  We use this connection to calculate the moments of the position operator in various separable quantum systems, such as the quantum harmonic oscillator and the hydrogen atom, showing that they may be expressed as generating functions of Motzkin paths and symmetric lecture hall paths, thereby providing a quantum interpretation for these paths.  Our approach can be extended to other quantum systems where the wavefunctions are expressed in terms of orthogonal polynomials.
\end{abstract}

\maketitle
\tableofcontents


\section{Introduction}
Orthogonal polynomials play a central role in many areas of mathematics and physics. Since the 1980s, the groundbreaking work of Viennot \cite{ViennotLN, ViennotOP} and Flajolet \cite{Flajolet1980} has led to several interesting combinatorial interpretations of orthogonal polynomials, thereby providing novel combinatorial framings of the problems in which they arise. 

Our focus here will be on lattice paths that arise in the combinatorial representations of orthogonal polynomials. We study two types of lattice paths: (1) the Motzkin paths introduced by Viennot and (2) paths on the lecture hall graph introduced by Corteel and Kim \cite{CorteelLHT}. The lecture hall paths were first studied in the context of little $q$-Jacobi polynomials, but have recently been extended to the whole Askey scheme in \cite{Corteel2023}. As one of our main results, we provide a bijection between these two paths, by introducing a new graph that we call the symmetric lecture hall graph, see \Cref{sec:lhg}. 

As our second result, we give a quantum mechanical interpretation of both Motzkin and symmetric lecture paths.  In order to give concrete examples, we focus on the Hermite and Laguerre polynomials that arise in the quantum harmonic oscillator and the hydrogen atom respectively. These are canonical examples of separable systems in which the quantum wavefunctions can be calculated exactly and are related to orthogonal polynomials. The quantum harmonic oscillator is of fundamental importance in many areas of physics, such as quantum field theory, lattice vibrations and phonons, and electrodynamics. The hydrogen atom is special because, being a two-body problem, it is one of the few atomic systems that can be solved analytically. In both cases, we can express the moments of the position operator as weighted paths on the lattice graphs. 

This paper is organised as follows.  In \Cref{sec:mot_paths}, we recall Viennot's theory for Motzkin paths. In \Cref{sec:lhg}, we define symmetric lecture hall graphs and provide a connection with Motzkin paths. In \Cref{sec:qho}, we detail the connection between moments of the position operator of the quantum harmonic oscillator and both Motzkin and symmetric lecture hall paths. Finally, in \Cref{sec:hatom} we extend these results to the case of the hydrogen atom.

\section{Motzkin Paths}
\label{sec:mot_paths}
Orthogonal polynomials can be defined as a sequence $\{p_n(x)\}_{n\ge0}$ of polynomials with $ \deg p_n(x) =n$  such that there is a linear functional $\LL$  satisfying
 $\LL(p_n(x)p_m(x)) = K_n \delta_{n,m} $, where $K_n\ne 0$ . Their 
\emph{moments $\{\sigma_n\}_{n\ge0} $} are defined by
\begin{equation*}
    \sigma_n = \frac{\LL(x^n)}{\LL(1)}. 
\end{equation*}
More generally, the \emph{mixed moments $\{\sigma_{n,k}\}_{n,k\ge 0}$} are defined by
\begin{equation*}
\sigma_{n,k} = \frac{\LL(x^np_k(x))}{\LL(p_k(x)^2)}.    
\end{equation*}
By the orthogonality,
\begin{equation}\label{eq:def:sigma}
  x^n = \sum_{k=0}^{n} \sigma_{n,k} p_k(x),
\end{equation}
which can also be taken as the definition of the mixed moments. 

Viennot \cite{ViennotLN, ViennotOP} found a combinatorial interpretation for $\sigma_{n,k}$ using Motzkin paths. To illustrate,
suppose that $\{ p_n(x) \}_{n\ge 0}$ is a sequence of monic orthogonal polynomials satisfying the 3-term recurrence
\begin{equation}
\label{eq:3-term}
p_{n+1}(x) = (x-b_n)p_n(x)-\lambda_{n}p_{n-1}(x),\quad \lambda_0=0.
\end{equation}
A \emph{Motzkin path} $p$ of length $|p|=c-a$ is a path on $\mathbb{Z}^2$ from $(a, b)$ to $(c,d)$ consisting of up steps $(1,1) $, horizontal steps $(1,0)$, and downs steps $(1,-1)$ that never go below the line $y = 0$. The weight of a Motzkin path $w_{\Mot}(p)$ is the product of the weights of the steps in $p $, where the weight of an up step is always $1 $, the weight of a horizontal step at height $k$ is $b_k$, and the weight of a down step starting at height $k$ is $\lambda_k$. Viennot showed that $\sigma_{n,k}$ is the sum of $ w(p)$ for all Motzkin paths $ p$ from $(0,0)$ to $(n,k)$. For example,
\begin{equation*}
   \sigma_{3,1} = b_0^2+b_0b_1+b_1^2+\lambda_1+\lambda_2   
\end{equation*}
is the generating function for all Motzkin paths from $(0,0)$ to $(3,1)$  as shown in \Cref{fig:motzkin}. Using this combinatorial
interpretation for $\sigma_{n,k}$ one can show that the moments of Hermite, Charlier, and Laguerre polynomials are generating functions
for perfect matchings, set partitions, and permutations, respectively.
See \cite{CKS, ViennotLN, ViennotOP, Zeng2021} for more details on the combinatorics of orthogonal polynomials.

\begin{figure}
  \hspace*{\fill}
    \begin{subfigure}[b]{0.19\textwidth}
\centering
\begin{tikzpicture}[scale = 0.65]
\draw(-0.25,-0.25) grid (3.5,2.5);
\foreach \x in {0,1,...,3} { \node [anchor=north] at (\x,-0.3) {\x}; }
\foreach \y in {0,1,2} { \node [anchor=east] at (-0.3,\y) {\y}; }
\draw [thick] (0,0) -- (1,0) -- (2,0)-- (3,1);
\filldraw[black](0,0)circle[radius=2pt];
\filldraw[black](1,0)circle[radius=2pt];
\filldraw[black](2,0)circle[radius=2pt];
\filldraw[black](3,1)circle[radius=2pt];
\node [anchor=north] at (0.5,0.75) {$b_0$};
\node [anchor=north] at (1.5,0.75) {$b_0$};
\end{tikzpicture}
\captionsetup{labelformat=empty}
\caption{$w_{\Mot}(p) =b_0^2$}
\end{subfigure} 
   \begin{subfigure}[b]{0.19\textwidth}
\centering
\begin{tikzpicture}[scale = 0.65]
\draw(-0.25,-0.25) grid (3.5,2.5);
\foreach \x in {0,1,...,3} { \node [anchor=north] at (\x,-0.3) {\x}; }
\draw [thick] (0,0) -- (1,0) -- (2,1)-- (3,1);
\filldraw[black](0,0)circle[radius=2pt];
\filldraw[black](1,0)circle[radius=2pt];
\filldraw[black](2,1)circle[radius=2pt];
\filldraw[black](3,1)circle[radius=2pt];
\node [anchor=north] at (0.5,0.75) {$b_0$};
\node [anchor=north] at (2.5,1.75) {$b_1$};
\end{tikzpicture}
\captionsetup{labelformat=empty}
\caption{$w_{\Mot}(p) =b_0 b_1$}
\end{subfigure} 
  \begin{subfigure}[b]{0.19\textwidth}
\centering
\begin{tikzpicture}[scale = 0.65]
\draw(-0.25,-0.25) grid (3.5,2.5);
\foreach \x in {0,1,...,3} { \node [anchor=north] at (\x,-0.3) {\x}; }
\draw [thick] (0,0) -- (1,1) -- (2,1)-- (3,1);
\filldraw[black](0,0)circle[radius=2pt];
\filldraw[black](1,1)circle[radius=2pt];
\filldraw[black](2,1)circle[radius=2pt];
\filldraw[black](3,1)circle[radius=2pt];
\node [anchor=north] at (1.5,1.75) {$b_1$};
\node [anchor=north] at (2.5,1.75) {$b_1$};
\end{tikzpicture}
\captionsetup{labelformat=empty}
\caption{$w_{\Mot}(p) = b_1^2$}
\end{subfigure} 
\begin{subfigure}[b]{0.19\textwidth}
\centering
\begin{tikzpicture}[scale = 0.65]
\draw(-0.25,-0.25) grid (3.5,2.5);
\foreach \x in {0,1,...,3} { \node [anchor=north] at (\x,-0.3) {\x}; }
\draw [thick] (0,0) -- (1,1) -- (2,0)-- (3,1);
\filldraw[black](0,0)circle[radius=2pt];
\filldraw[black](1,1)circle[radius=2pt];
\filldraw[black](2,0)circle[radius=2pt];
\filldraw[black](3,1)circle[radius=2pt];
\node [anchor=north] at (1.75,1.1) {$\lambda_1$};
\end{tikzpicture}
\captionsetup{labelformat=empty}
\caption{$w_{\Mot}(p) = \lambda_1$}
\end{subfigure} 
\begin{subfigure}[b]{0.19\textwidth}
\centering
\begin{tikzpicture}[scale = 0.65]
\draw(-0.25,-0.25) grid (3.5,2.5);
\foreach \x in {0,1,...,3} { \node [anchor=north] at (\x,-0.3) {\x}; }
\draw [thick] (0,0) -- (2,2) -- (3,1);
\filldraw[black](0,0)circle[radius=2pt];
\filldraw[black](1,1)circle[radius=2pt];
\filldraw[black](2,2)circle[radius=2pt];
\filldraw[black](3,1)circle[radius=2pt];
\node [anchor=north] at (2.75,2.1) {$\lambda_2$};
\end{tikzpicture}
\captionsetup{labelformat=empty}
\caption{$w_{\Mot}(p) = \lambda_2$}
\end{subfigure}\hspace{0.0cm} 
  \hspace*{\fill}
  \caption{ The Motzkin paths from \( (0,0) \) to \( (3,1) \) and
    their weights. The weights of horizontal and down steps are
    indicated.}
\label{fig:motzkin}
\end{figure}
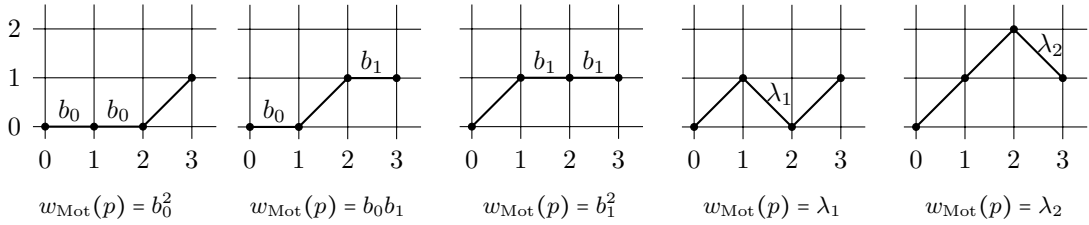

The mixed moments can be generalised by considering the linearisation expansion of $x^np_m(x)$ in the orthogonal polynomial basis as
\begin{equation*}
    x^n p_m(x) = \sum_{k=0}^{n+m}\sigma_{n,m,k}p_k(x),
\end{equation*}
where 
\begin{equation}
\label{eq:lin_coef_def}
\sigma_{n,m,k} = \frac{\LL(x^n p_m(x) p_k(x))}{\LL(p^2_k)}.
\end{equation}
Instead of the monomial, one can consider any polynomial in $x$, say $f(x)$, and study the expansion of $f(x)p_m(x)$ in the basis $\{p_k(x)\}_{k\geq 0}$.

\begin{thm}[\cite{ViennotLN, ViennotOP}]
\label{thm:lin_coef}
    For $n,m,k\geq 0$, consider a set of Motzkin paths $p$ of length $|p|=n$. Let $p:m\leadsto k$ be the set of paths starting from height $m$ and ending at height $k$. Then the coefficients $\sigma_{n,m,k}$ are generating functions given by
    \begin{equation}
        \label{eq:lin_coef}
        \sigma_{n,m,k} =\sum_{p:m\leadsto k,\,  |p|=n} w_{\Mot}(p).
    \end{equation}
\end{thm}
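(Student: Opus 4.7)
The plan is to establish \Cref{thm:lin_coef} by induction on $n$, converting the three-term recurrence \eqref{eq:3-term} into a recurrence on the linearization coefficients $\sigma_{n,m,k}$ that exactly matches the one obtained by classifying Motzkin paths according to their final step.

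For the base case $n=0$, the definition \eqref{eq:lin_coef_def} forces $\sigma_{0,m,k}=\delta_{m,k}$, which matches the unique empty Motzkin path from $m$ to $m$ of weight $1$. For the inductive step, I would rewrite \eqref{eq:3-term} in the form
$$x\, p_j(x) = p_{j+1}(x) + b_j\, p_j(x) + \lambda_j\, p_{j-1}(x),$$
then expand
$$x^n p_m(x) = x \sum_{j} \sigma_{n-1,m,j}\, p_j(x),$$
substitute the three-term identity, and read off the coefficient of $p_k(x)$. This yields the recurrence
$$\sigma_{n,m,k} = \sigma_{n-1,m,k-1} + b_k\, \sigma_{n-1,m,k} + \lambda_{k+1}\, \sigma_{n-1,m,k+1},$$
with the convention $\sigma_{n-1,m,-1}=0$ (and $\lambda_0=0$ guaranteeing consistency at $k=0$).

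On the combinatorial side, I would partition the set $\{p:m\leadsto k,\,|p|=n\}$ according to the final step of $p$: either an up step arriving at height $k$ from a path ending at $k-1$ (weight $1$), a horizontal step at height $k$ from a path ending at $k$ (weight $b_k$), or a down step from $k+1$ to $k$ (weight $\lambda_{k+1}$). Summing these three contributions after factoring out the weight of the last step gives exactly the same recurrence for $\sum_{p:m\leadsto k,\,|p|=n} w_{\Mot}(p)$, so the induction hypothesis closes the argument.

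The only point requiring care is the boundary at $k=0$: combinatorially no Motzkin path can terminate at $y=0$ with an up step, and algebraically the missing $\sigma_{n-1,m,-1}$ contribution is enforced by $\lambda_0=0$ in the recurrence \eqref{eq:3-term}. Beyond this bookkeeping there is no real obstacle; the theorem is a structural reflection of the fact that multiplication by $x$ in the basis $\{p_k\}$ is the tridiagonal Jacobi operator whose matrix entries $(J^n)_{m,k}$ are, by transfer-matrix expansion, precisely the weighted counts of Motzkin paths from $m$ to $k$ of length $n$.
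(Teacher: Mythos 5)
Your proof is correct; note that the paper states \Cref{thm:lin_coef} only as a cited result of Viennot and gives no proof of its own, and your induction on $n$ --- matching the recurrence $\sigma_{n,m,k}=\sigma_{n-1,m,k-1}+b_k\,\sigma_{n-1,m,k}+\lambda_{k+1}\,\sigma_{n-1,m,k+1}$ obtained from \eqref{eq:3-term} against the last-step decomposition of Motzkin paths --- is exactly the standard transfer-matrix argument from those sources. One cosmetic imprecision: the absence of a $\sigma_{n-1,m,-1}$ term when $k=0$ is due simply to the expansion index $j$ in $x^{n-1}p_m=\sum_{j\ge 0}\sigma_{n-1,m,j}p_j$ being nonnegative, whereas the role of $\lambda_0=0$ is the separate (and also needed) fact that $xp_0$ produces no $p_{-1}$ component; both boundary issues resolve as you say, so this does not affect the validity of the argument.
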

Note that when $n=0$, $\sigma_{0,m,k}=\delta_{km}$ recovering the orthogonality relation.

\section{Lecture hall graphs}
\label{sec:lhg}
In this section, we give basic definitions and lemmas on lecture hall graphs that we use later to establish a connection to the moments of the quantum harmonic oscillator and the hydrogen atom.

\subsection{Symmetric lecture hall graphs} Corteel and Kim introduced the lecture hall graph and its dual in their study of little $q$-Jacobi polynomials \cite{CorteelLHT}. The theory was subsequently extended to the entire $q$-Askey and the Askey scheme in \cite{Corteel2023}. Some properties of the graph and further applications can be found in \cite{Corteel2019enum, Corteel2021arc}. Here we consider a symmetric lecture hall graph, which is a fusion of the lecture hall graph and the dual lecture hall graph.

For nonnegative integers $i,j$ with $j\le i$ and $t\in \{0,1\}$,
we denote
\begin{equation}
  v^t_{i,j} = 
      \left(i,t+ \frac{j}{i+1}\right) \in \ZZ\times\QQ. 
\end{equation}

\begin{defn}
  The \emph{symmetric lecture hall graph} is the (undirected) graph $\mathcal{G} = (V,E)$, where
\begin{align*}
  V &= \{v^t_{i,j}: i,j\in \ZZ_{\ge0}, 0\le j\le i, 0\le t\le 1 \},\\
  E &= \{(v^t_{i,j}, v^t_{i,j+1})\in V^2 : i,j\in\ZZ_{\ge0},0\le t\le 1  \} 
        \cup \{(v^t_{i,j}, v^t_{i+1,j})\in V^2 : i,j\in\ZZ_{\ge0},0\le t\le 1  \}.
\end{align*}
Here, the notation $\{(v^t_{i,j}, v^t_{i,j+1})\in V^2 : i,j\in\ZZ_{\ge0},0\le t\le 1  \}$ means  the set of all elements in $V^2$ of the form $(v^t_{i,j}, v^t_{i,j+1})$ for some $i,j\in\ZZ_{\ge0}$ and $0\le t\le 1 $. See
\Cref{fig:slhg_def}. 
\end{defn}

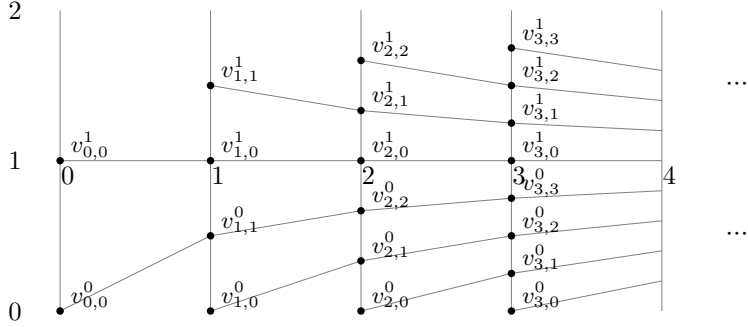
\begin{figure}
  \centering
  \begin{tikzpicture}[scale=2]
    \SLHlabel{4}
    \SLH{4}
        \begin{scope}
      \node at (0,0) [circle,fill,inner sep=1pt]{};
      \node at (1,0) [circle,fill,inner sep=1pt]{};
      \node at (1,1/2) [circle,fill,inner sep=1pt]{};
      \node at (2,0) [circle,fill,inner sep=1pt]{};
      \node at (2,1/3) [circle,fill,inner sep=1pt]{};
      \node at (2,2/3) [circle,fill,inner sep=1pt]{};
      \node at (3,0) [circle,fill,inner sep=1pt]{};
      \node at (3,1/4) [circle,fill,inner sep=1pt]{};
      \node at (3,2/4) [circle,fill,inner sep=1pt]{};
      \node at (3,3/4) [circle,fill,inner sep=1pt]{};
      \node at (0,1) [circle,fill,inner sep=1pt]{};
      \node at (1,1) [circle,fill,inner sep=1pt]{};
      \node at (1,3/2) [circle,fill,inner sep=1pt]{};
      \node at (2,1) [circle,fill,inner sep=1pt]{};
      \node at (2,4/3) [circle,fill,inner sep=1pt]{};
      \node at (2,5/3) [circle,fill,inner sep=1pt]{};
      \node at (3,1) [circle,fill,inner sep=1pt]{};
      \node at (3,5/4) [circle,fill,inner sep=1pt]{};
      \node at (3,6/4) [circle,fill,inner sep=1pt]{};
      \node at (3,7/4) [circle,fill,inner sep=1pt]{};
    \end{scope}
    \begin{scope}[shift={(0.2,0.1)}]
      \small
      \node at (0,0) {\( v^0_{0,0} \)};
      \node at (1,0) {\( v^0_{1,0} \)};
      \node at (1,1/2) {\( v^0_{1,1} \)};
      \node at (2,0) {\( v^0_{2,0} \)};
      \node at (2,1/3) {\( v^0_{2,1} \)};
      \node at (2,2/3) {\( v^0_{2,2} \)};
      \node at (3,0) {\( v^0_{3,0} \)};
      \node at (3,1/4) {\( v^0_{3,1} \)};
      \node at (3,2/4) {\( v^0_{3,2} \)};
      \node at (3,3/4) {\( v^0_{3,3} \)};
      \node at (0,1) {\( v^1_{0,0} \)};
      \node at (1,1) {\( v^1_{1,0} \)};
      \node at (1,3/2) {\( v^1_{1,1} \)};
      \node at (2,1) {\( v^1_{2,0} \)};
      \node at (2,4/3) {\( v^1_{2,1} \)};
      \node at (2,5/3) {\( v^1_{2,2} \)};
      \node at (3,1) {\( v^1_{3,0} \)};
      \node at (3,5/4) {\( v^1_{3,1} \)};
      \node at (3,6/4) {\( v^1_{3,2} \)};
      \node at (3,7/4) {\( v^1_{3,3} \)};
    \end{scope}
    \node at (4.5,0.5) {$\cdots$ };
    \node at (4.5,1.5) {$\cdots $};
  \end{tikzpicture}
  \caption{Symmetric lecture hall graph with vertices indicated.}
  \label{fig:slhg_def}
  \end{figure}

We consider two kinds of steps:
\begin{itemize}
\item an \emph{east step} is a directed edge of the form
   $(v^t_{i,j}, v^t_{i+1,j})$ in $\mathcal{G}$,
\item a \emph{south step} is a directed edge of the form
  $(v^t_{i,j}, v^t_{i,j-1})$ or $(v^t_{i,0}, v^{t-1}_{i,i})$.
\end{itemize}

For two vertices $u$ and $v$ in $\mathcal{G}$ , a \emph{path} from $u$ to $ v$ is a sequence $(v_0,v_1,\dots,v_n)$ of vertices of $\mathcal{G}$ such that $v_0=u$, $v_n=v$, and each $(v_{i}, v_{i+1})$ is an east or south step. Let $\SE(u\to v)$  denote the set of paths from $u$ to $v$ in $\mathcal{G}$ consisting of south and east steps. 

A \emph{weight system} is a function $w $ that assigns a weight $w(s)$  to each east step $s=(v^t_{i,j}, v^t_{i+1,j})$ in $\mathcal{G}$. We denote
\begin{equation*}
   w(t;i,j) := w(v^t_{i,j},v^t_{i+1,j}).   
\end{equation*}
In other words, $ w(t;i,j)$ is the weight of the $j$th east step from the bottom, where the bottommost one is the $0$th step, among the east steps in the region $\{(x,y): i\le x\le i+1, t-1\le y<t \}$, see \Cref{fig:slhg_wt}. Given a weight system $w$, the weight $ w(p)$ of a path $p$ is defined to be the product of the weights of all east steps in $p$.

\begin{figure}
  \centering
  \begin{tikzpicture}[scale=2]
    \SLHlabel{4}
    \SLH{4}
    \begin{scope}
      \node at (0,0) [circle,fill,inner sep=1pt]{};
      \node at (1,0) [circle,fill,inner sep=1pt]{};
      \node at (1,1/2) [circle,fill,inner sep=1pt]{};
      \node at (2,0) [circle,fill,inner sep=1pt]{};
      \node at (2,1/3) [circle,fill,inner sep=1pt]{};
      \node at (2,2/3) [circle,fill,inner sep=1pt]{};
      \node at (3,0) [circle,fill,inner sep=1pt]{};
      \node at (3,1/4) [circle,fill,inner sep=1pt]{};
      \node at (3,2/4) [circle,fill,inner sep=1pt]{};
      \node at (3,3/4) [circle,fill,inner sep=1pt]{};
      \node at (0,1) [circle,fill,inner sep=1pt]{};
      \node at (1,1) [circle,fill,inner sep=1pt]{};
      \node at (1,3/2) [circle,fill,inner sep=1pt]{};
      \node at (2,1) [circle,fill,inner sep=1pt]{};
      \node at (2,4/3) [circle,fill,inner sep=1pt]{};
      \node at (2,5/3) [circle,fill,inner sep=1pt]{};
      \node at (3,1) [circle,fill,inner sep=1pt]{};
      \node at (3,5/4) [circle,fill,inner sep=1pt]{};
      \node at (3,6/4) [circle,fill,inner sep=1pt]{};
      \node at (3,7/4) [circle,fill,inner sep=1pt]{};
    \end{scope}
    \begin{scope}[shift={(0.5,0.1)}]
      \small
      \node at (0,0.3) {\( w(0;0,0) \)};
      \node at (1,0.2) {\( w(0;1,0) \)};
      \node at (1,0.6) {\( w(0;1,1) \)};
      \node at (2,0) {\( w(0;2,0) \)};
      \node at (2,1/3) {\( w(0;2,1) \)};
      \node at (2,2/3) {\( w(0;2,2) \)};
      \node at (3,0) {\( w(0;3,0) \)};
      \node at (3,1/4) {\( w(0;3,1) \)};
      \node at (3,2/4) {\( w(0;3,2) \)};
      \node at (3,3/4) {\( w(0;3,3) \)};
      \node at (0,1) {\( w(1;0,0) \)};
      \node at (1,1) {\( w(1;1,0) \)};
      \node at (1,3/2) {\( w(1;1,1) \)};
      \node at (2,1) {\( w(1;2,0) \)};
      \node at (2,4/3) {\( w(1;2,1) \)};
      \node at (2,5/3) {\( w(1;2,2) \)};
      \node at (3,1) {\( w(1;3,0) \)};
      \node at (3,5/4) {\( w(1;3,1) \)};
      \node at (3,6/4) {\( w(1;3,2) \)};
      \node at (3,7/4) {\( w(1;3,3) \)};
    \end{scope}
    \draw [red, very thick] (4,0/5) -- (4,3/5) -- (3,2/4) -- (2,1/3) -- (2,4/3) -- (1,3/2) -- (1,4/2);
    \node at (4.5,0.5) {$\cdots$};
    \node at (4.5,1.5) {$\cdots$};
    \end{tikzpicture}
      \caption{Weights in symmetric lecture hall graph with a path in $\SE((1,2)\to (4,0))$.}
      \label{fig:slhg_wt}
\end{figure}
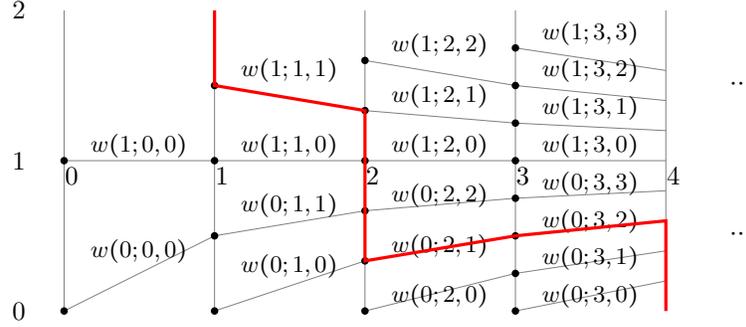

The $t=0$ and $t=1$ levels in the symmetric graph are the dual lecture hall graph and the lecture hall graph of height 1, respectively. Observe that these two levels can be identified with the staircase grid as shown in \Cref{fig:lhlg_grid,fig:dlhg_grid}. This bijection implies the following results.

\begin{prop}
    In the symmetric lecture hall graph, the total number of paths $p:(k,2)\to(n,1)$ and $p:(k,1)\to (n,0)$ are $\binom{n}{k}$ and $\frac{k+1}{n+1}\binom{2n-k}{n-k}$, respectively.
\end{prop}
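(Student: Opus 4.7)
The plan is to identify the endpoints $(k,2),(n,1)$ with the top and bottom vertices $v^1_{k,k},v^1_{n,0}$ of the $t=1$ level and $(k,1),(n,0)$ with $v^0_{k,k},v^0_{n,0}$ of the $t=0$ level, and then to count in each layer directly, exploiting the fact that the east step is horizontal, $v^1_{i,j}\to v^1_{i+1,j}$, at level $t=1$ but diagonal, $v^0_{i,j}\to v^0_{i+1,j+1}$, at level $t=0$ (as drawn in Figures~\ref{fig:slhg_def} and~\ref{fig:slhg_wt}).

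For the first count, a path from $v^1_{k,k}$ to $v^1_{n,0}$ uses exactly $n-k$ east steps (preserving $j$) and $k$ south steps (lowering $j$ by one). Since $i$ is non-decreasing and $j$ non-increasing along any such path, the vertex condition $0\le j\le i$ holds for every interleaving, so the count is simply the multinomial $\binom{n}{k}$.

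For the second count, the diagonal east step forces any path from $v^0_{k,k}$ to $v^0_{n,0}$ to use $n-k$ east steps (each adjusting $(i,j)$ by $(+1,+1)$) and $n$ south steps (each adjusting by $(0,-1)$), for a total length of $2n-k$. Reading east as $+1$ and south as $-1$ converts such a sequence into a one-dimensional $\pm1$-walk from height $k$ to height $0$ of length $2n-k$, with $n-k$ up-steps and $n$ down-steps; the condition $j\ge 0$ becomes exactly the requirement that the walk stays weakly above the axis, while $j\le i$ holds automatically. The reflection principle then yields
\[
\binom{2n-k}{n-k}-\binom{2n-k}{n+1}=\frac{k+1}{n+1}\binom{2n-k}{n-k},
\]
as asserted. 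The one delicate point of the argument is correctly recognising the diagonal nature of the east edges at $t=0$: this is what turns the vertex condition, which is vacuous at $t=1$, into a genuine ballot inequality and produces the Catalan-triangle entry instead of a binomial. Everything else in the second count reduces to the textbook reflection-principle computation and the standard identity $\binom{2n-k}{n-k}-\binom{2n-k}{n-k-1}=\tfrac{k+1}{n+1}\binom{2n-k}{n-k}$ relating consecutive binomial coefficients.
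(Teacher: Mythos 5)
Your argument is correct and follows essentially the same route as the paper: both counts come down to choosing the positions of the $n-k$ east steps among all $2n-k$ (resp.\ $n$) steps, with the second count cut down by a ballot-type constraint, which the paper handles by invoking the Ballot theorem via the staircase-grid picture and you handle by the equivalent reflection-principle computation $\binom{2n-k}{n-k}-\binom{2n-k}{n-k-1}=\frac{k+1}{n+1}\binom{2n-k}{n-k}$. Your explicit observation that the east steps at level $t=0$ act diagonally, $v^0_{i,j}\to v^0_{i+1,j+1}$, is the correct reading (it matches the figures and the paper's own tally of $n$ south steps and $n-k$ east steps), even though the displayed definition of the edge set would literally give horizontal east steps at both levels.
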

\begin{proof}
    The enumeration problem in the case $p:(k,2)\to (n,1)$ is straightforward due to the bijection to the staircase grid. To count the number of paths $p$ such that $p:(k,1)\to(n,0)$, note that there are a total of $2n-k$ steps (excluding the first south step) out of which there are $n$ south steps and $n-k$ east steps. The total number of such paths is $\binom{2n-k}{n-k}$. Furthermore, we have the constraint that the number of south steps is strictly greater than the number of east steps. This is easily seen from the staircase grid in \Cref{fig:dlhg_grid} as the paths lie strictly above the diagonal. The proportion of paths that satisfy this constraint is $\frac{k+1}{n+1}$ by the Ballot theorem.  Therefore, the total number of paths $p:(k,1)\to (n,0)$ is $\frac{k+1}{n+1}\binom{2n-k}{n-k}$.
\end{proof}
\begin{figure}
\begin{subfigure}[b]{\textwidth}
\centering
  \begin{tikzpicture}[scale=1.5]
    \LHLL{5}1
    \draw[white, thick] (0,1) -- (5,1);
    \draw [red, very thick] (5,0) -- (5,1/6) -- (4,1/5)
     -- (4,2/5) -- (3,2/4) -- (2,2/3) -- (2,1);
     \node at (-0.3,0) {1};
     \node at (-0.3,1) {2};
     \node at (0,-0.2) {0};
     \node at (1,-0.2) {1};
     \node at (2,-0.2) {2};
     \node at (3,-0.2) {3};
     \node at (4,-0.2) {4};
     \node at (5,-0.2) {5};
     \node at (5.5,.5) {\(\Leftrightarrow\)};
   \end{tikzpicture}
  \begin{tikzpicture}[scale=0.5]
    \TRIL{5}
    \pgfmathsetmacro{\m}{5};
    \foreach \i in {0,...,\m}
    {
    \node at (\i,-0.3) {\i};
    }
    \draw [red, very thick] (5,0) -- (5,1) -- (4,1)
     -- (4,2) -- (3,2) -- (3,2) -- (2,2);
  \end{tikzpicture}
  \caption{A path in the lecture hall graph of height 1 and its corresponding path in the staircase lattice.}
  \label{fig:lhlg_grid}
  \end{subfigure}

  \begin{subfigure}[b]{\textwidth}
  \centering
  \begin{tikzpicture}[scale=1.5]
    \DLHLL{5}1
    \draw[white, thick] (0,1) -- (5,1);
    \draw [red, very thick] (2,1) -- (2,1/3) -- (3,2/4) -- (4,3/5) -- (4,1/5) -- (5,2/6) -- (5,0);
     \node at (-0.3,0) {0};
     \node at (-0.3,1) {1};
     \node at (0,-0.2) {0};
     \node at (1,-0.2) {1};
     \node at (2,-0.2) {2};
     \node at (3,-0.2) {3};
     \node at (4,-0.2) {4};
     \node at (5,-0.2) {5};
     \node at (5.5,.5) {\(\Leftrightarrow\)};
   \end{tikzpicture}
  \begin{tikzpicture}[scale=0.5]
\TRIDL{6}
\pgfmathsetmacro{\m}{5};
\foreach \i in {0,...,\m}
{
\node at (\i+1,-.3) {\i};
}
  \draw [red, very thick] (3,6) -- (3,4) -- (5,4) -- (5,2) -- (6,2) -- (6,0);    
  \end{tikzpicture}
  \caption{A path in the dual lecture hall graph of height 1 and its corresponding path in the staircase lattice.}
  \label{fig:dlhg_grid}
  \end{subfigure}
\end{figure}

\begin{lem}
\label{lem:g_level1}
Let $w$ be the weight of the first level $w(0;i,j)$ in the symmetric lecture hall graph. Let 
\begin{equation*}
   g_{n,k}^w = \sum_{p:\SE((k,1)\to (n,0))}w(p). 
\end{equation*}
We have
\begin{equation*}
    g_{n,k}^w = w(0;k,k)g_{n,k+1}^w + g_{n-1,k-1}^{w^{-}},
\end{equation*}
where $w^{-} = w(0;i+1,j)$ for $j\leq i$ and $i,j\in\mathbf{Z}_{\geq 0}$. 
\end{lem}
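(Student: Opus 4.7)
The plan is a first-step decomposition in the $t = 0$ layer (the dual lecture hall graph). Every path $p \in \SE((k, 1) \to (n, 0))$ begins with the unweighted wraparound step from $v^1_{k, 0}$ to $v^0_{k, k}$, so $g_{n, k}^w$ equals the weighted count of south-east paths from $v^0_{k, k}$ to $v^0_{n, 0}$ in the $t = 0$ layer. From $v^0_{k, k}$ there are exactly two outgoing moves: an east step to $v^0_{k+1, k+1}$ of weight $w(0; k, k)$, and a south step to $v^0_{k, k-1}$ of weight $1$. These produce the two summands of the recursion.

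If the first move is east, the tail is a south-east path from $v^0_{k+1, k+1}$ to $v^0_{n, 0}$, which is exactly the set enumerated by $g_{n, k+1}^w$. Summing over such paths yields the first summand $w(0; k, k)\, g_{n, k+1}^w$.

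If the first move is south, the tail lies in the $t = 0$ layer and runs from $v^0_{k, k-1}$ to $v^0_{n, 0}$. I would match such tails with the paths enumerated by $g_{n-1, k-1}^{w^{-}}$ via the column shift $v^0_{i, j} \mapsto v^0_{i-1, j}$, which sends $v^0_{k, k-1}$ to $v^0_{k-1, k-1}$ and $v^0_{n, 0}$ to $v^0_{n-1, 0}$. The bijection is weight-preserving because each east step $v^0_{i, j} \to v^0_{i+1, j+1}$ of weight $w(0; i, j)$ is sent to an east step out of $v^0_{i-1, j}$ whose weight under $w^{-}$ is $w^{-}(0; i-1, j) = w(0; i, j)$, by the definition of $w^{-}$.

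The principal subtlety is verifying the column shift stays inside the graph: at each vertex $v^0_{i, j}$ visited by the tail one needs $v^0_{i-1, j}$ to be a valid vertex, equivalently $j \le i - 1$. This reduces to the invariant $i - j \ge 1$, which holds at the starting vertex $v^0_{k, k-1}$ (where $i - j = 1$), is preserved by east steps (they keep $i - j$ unchanged), and is increased by $1$ by each south step. Combining both cases gives the claimed recursion.
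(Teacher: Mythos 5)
Your argument is correct, and it actually fills a gap: the paper states \Cref{lem:g_level1} without proof (unlike \Cref{lem:h_level2}, which is cited from \cite{Corteel2023}), so there is no proof of record to compare against. Your first-step decomposition at $v^0_{k,k}$ is the natural one, and you handle the two points that need care: the weight bookkeeping under the reindexed system $w^-$ (the shift $v^0_{i,j}\mapsto v^0_{i-1,j}$ sends an east step of weight $w(0;i,j)$ to one of weight $w^-(0;i-1,j)=w(0;i,j)$), and the verification that the shifted path stays in the graph via the invariant $i-j\ge 1$, which holds at $v^0_{k,k-1}$ and is preserved by east steps and increased by south steps. Note also that you have correctly read the intended structure of the $t=0$ layer as the \emph{dual} lecture hall graph, with east steps $v^0_{i,j}\to v^0_{i+1,j+1}$; the paper's formal edge set literally says $(v^t_{i,j},v^t_{i+1,j})$ for both layers, but the figures, the weight labelling, and the recursion itself only make sense with your reading, so this is the right call. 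The boundary cases come out correctly too: for $k=0$ the south branch is empty (consistent with $g^{w^-}_{n-1,-1}=0$), and $g^w_{n,n}=1$.

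One small point worth making explicit: you assert that every path in $\SE((k,1)\to(n,0))$ begins with the wraparound south step $v^1_{k,0}\to v^0_{k,k}$. The edge set of $\mathcal{G}$ also contains east edges $v^1_{i,0}\to v^1_{i+1,0}$ along the line $y=1$, so this is a convention rather than a consequence of the definition of $\SE(u\to v)$; without it the recursion would acquire an extra term. The convention is clearly what the paper intends (its path count $\frac{k+1}{n+1}\binom{2n-k}{n-k}$ and the phrase ``excluding the first south step'' both presuppose it, and it is what makes the factorisation over the crossing point $(r,1)$ in \Cref{thm:dyck_slgh} a bijection), but a sentence acknowledging that you are adopting it would make the proof airtight.
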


\begin{lem}[\cite{Corteel2023}]
\label{lem:h_level2}
Let $w$ be the weight of the second level $w(1;i,j)$ in the symmetric lecture hall graph. Let 
\begin{equation*}
   h_{n,k}^w = \sum_{p:\SE((k,2)\to (n,1))}w(p). 
\end{equation*}
We have
\begin{equation*}
    h_{n,k}^w = w(0;n-1,0)h_{n-1,k}^w + h_{n-1,k-1}^{w^{+}},
\end{equation*}
where $w^{+} = w(0;i+1,j+1)$, $i,j\in\mathbf{Z}_{\geq 0}$. 
\end{lem}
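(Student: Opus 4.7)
The plan is to decompose each path $p\in\SE((k,2)\to(n,1))$ according to its final step into the endpoint $v^1_{n,0}=(n,1)$. First I pass through the bijection of \Cref{fig:lhlg_grid} between the second level of the symmetric lecture hall graph and the staircase grid $\{(i,j)\in\ZZ^2 : 0\le j\le i\le n\}$: a level-$1$ path from $(k,2)$ to $(n,1)$ corresponds to a staircase path from $(k,k)$ to $(n,0)$ built from east steps $(i,j)\to(i+1,j)$ and south steps $(i,j)\to(i,j-1)$, and the east step $(i,j)\to(i+1,j)$ inherits the weight of the edge $(v^1_{i,j},v^1_{i+1,j})$.

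In the staircase, the endpoint $(n,0)$ has exactly two predecessors, namely $(n-1,0)$ via an east step and $(n,1)$ via a south step, so every path terminates through one of them. If the last step is the east step from $(n-1,0)$, the prior portion is a staircase path from $(k,k)$ to $(n-1,0)$, which corresponds to an element of $\SE((k,2)\to(n-1,1))$, and the terminal east step contributes the weight $w(0;n-1,0)$. Summing gives the first term $w(0;n-1,0)\,h^{w}_{n-1,k}$ of the recurrence.

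If instead the last step is the south step from $(n,1)$ to $(n,0)$, the prior portion is a staircase path from $(k,k)$ to $(n,1)$. Because east and south steps are monotone in $j$, every vertex $(i,j)$ on such a subpath satisfies $j\ge 1$, so the coordinate shift $\phi\colon(i,j)\mapsto(i-1,j-1)$ is well defined on the full subpath. It bijects these subpaths with staircase paths from $(k-1,k-1)$ to $(n-1,0)$ inside the smaller staircase $\{(i',j'):0\le j'\le i'\le n-1\}$, and each east step $(i,j)\to(i+1,j)$ maps to an east step whose weight in the shifted system is exactly $w^{+}(\,\cdot\,;i-1,j-1)=w(\,\cdot\,;i,j)$. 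Hence this case contributes $h^{w^{+}}_{n-1,k-1}$, matching the second term.

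Adding the two cases produces the recurrence. The main point to verify is that $\phi$ is simultaneously a bijection of the relevant regions and a match of weights; both follow from the $j$-monotonicity of east/south motion, which keeps Case~B subpaths strictly above the bottom row $j=0$ of the staircase and so allows the whole shift to be performed safely inside the graph.
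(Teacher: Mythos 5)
Your argument is correct, and it supplies a proof that the paper itself omits (the lemma is only cited from \cite{Corteel2023}), so there is no in-paper proof to compare against. The identification of the $t=1$ level with the staircase $\{(i,j):0\le j\le i\}$ via $v^1_{i,j}\mapsto(i,j)$ (absorbing the formal initial point $(k,2)$ into $(k,k)$) is exactly the content of \Cref{fig:lhlg_grid}, the case split on the last step into $(n,0)$ is exhaustive, and your key observation --- that in the south-terminating case the whole prefix has $j\ge 1$ because $j$ is non-increasing along an east/south path, so the shift $(i,j)\mapsto(i-1,j-1)$ stays inside the staircase and turns the weight $w(\cdot\,;i,j)$ of an east step into $w^{+}(\cdot\,;i-1,j-1)$ --- is precisely what makes the second term equal to $h^{w^{+}}_{n-1,k-1}$. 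A small-case check ($n=2$, $k=1$) confirms the recurrence in the form you derive it.

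One notational wrinkle, which originates in the statement rather than in your proof: since the paths in question live entirely in the $t=1$ level, the east step out of $(n-1,0)$ is the edge $(v^1_{n-1,0},v^1_{n,0})$ and its weight is $w(1;n-1,0)$, not $w(0;n-1,0)$; likewise the shifted system should read $w^{+}(1;i,j)=w(1;i+1,j+1)$. You correctly identify the weight as that of $(v^1_{i,j},v^1_{i+1,j})$ and then quote the lemma's $w(0;\cdot,\cdot)$ verbatim; it would be worth flagging explicitly that the first argument in the displayed recurrence appears to be a typo for $1$ (compare \Cref{lem:g_level1}, where the $t=0$ level genuinely carries $w(0;k,k)$). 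This does not affect the validity of your decomposition.
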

\subsection{Connection with Motzkin paths}
When $b_n=0$ in \eqref{eq:3-term}, the mixed moment $\sigma_{n,k}$ equals zero whenever $n$ and $k$ have different parity. Therefore we consider the even, $\sigma_{2n,2k}$, and odd, $\sigma_{2n-1,2k-1}$, mixed moments separately, and establish the connection between Motzkin and symmetric lecture hall paths. When $b_n\neq 0$, the connection between Motzkin paths and lecture paths remains open. However, in the case when $b_n\neq 0$, one can colour the horizontal step in the Motzkin path and map it to either an appropriately weighted up-down or down-up step. This is a well-known bijection between 2-coloured Motzkin paths and Dyck paths \cite{ViennotLN}, see \Cref{thm:mot_slgh}. We illustrate this in the case of Laguerre polynomials in the Hydrogen atom section.

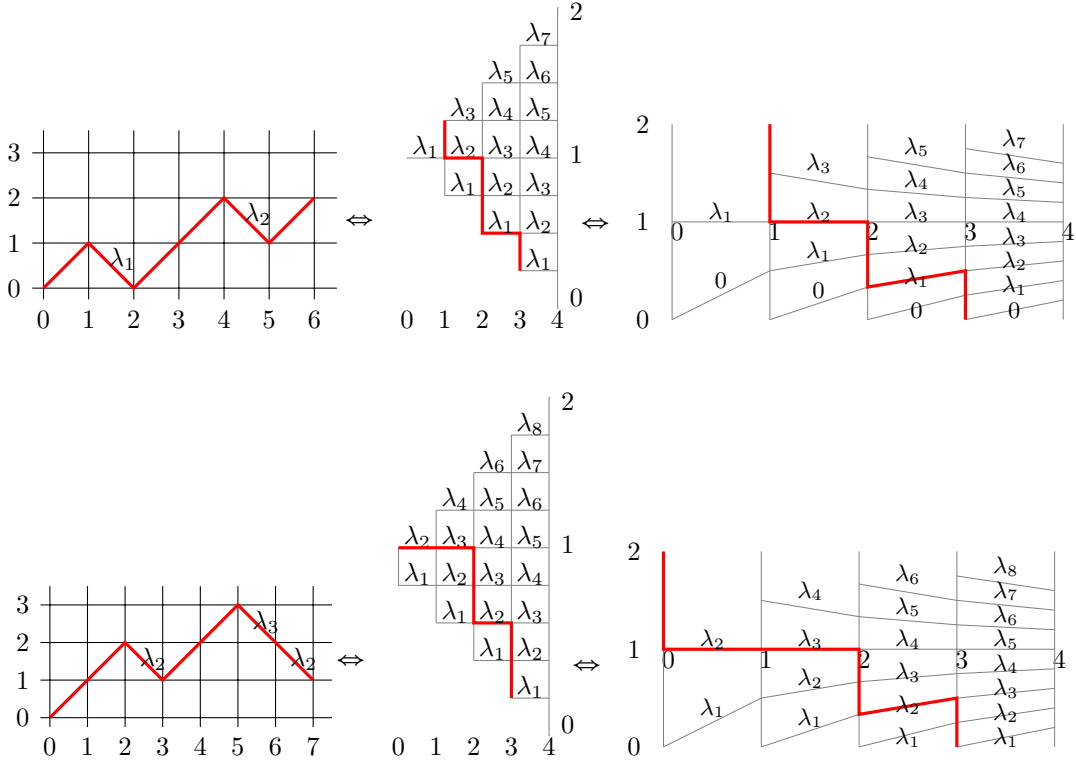
\begin{figure}
\begin{subfigure}[b]{\textwidth}
\centering
\begin{tikzpicture}[scale = 0.6]
\draw(-0.25,-0.25) grid (6.5,3.5);
\foreach \x in {0,1,...,6} { \node [anchor=north] at (\x,-0.3) {\x}; }
\foreach \y in {0,1,...,3} { \node [anchor=east] at (-0.3,\y) {\y}; }
\node at (7,1.5) {\(\Leftrightarrow\)};
\draw [red, very thick] (0,0) -- (1,1) -- (2,0) -- (3,1) -- (4,2) -- (5,1) -- (6,2);
\node [anchor=north] at (1.75,1.1) {$\lambda_1$};
\node [anchor=north] at (4.75,2.1) {$\lambda_2$};
\end{tikzpicture}
\begin{tikzpicture}[scale=0.5]
    \TRIL{4}
    \begin{scope} [shift={(0.5,0.3)}]
    \node at (0,0) {$\lambda_1$};
    \node at (1,0) {$\lambda_2$};
    \node at (1,1) {$\lambda_3$};
    \node at (2,0) {$\lambda_3$};
    \node at (2,1) {$\lambda_4$};
    \node at (2,2) {$\lambda_5$};
    \node at (3,0) {$\lambda_4$};
    \node at (3,1) {$\lambda_5$};
    \node at (3,2) {$\lambda_6$};
    \node at (3,3) {$\lambda_7$};
    \node at (4,-0.2) {$1$};
    \node at (4,3.6) {$2$};
    \end{scope}
    \begin{scope}[shift={(0,-4)}]
    \TRIDL{4} 
     \pgfmathsetmacro{\m}{4};
    \foreach \i in {0,...,\m}
    {
    \node at (\i,-.3) {\i};
    }
    \draw [red, very thick] (3,1) -- (3,2) -- (2,2) -- (2,3) -- (2,4) -- (1,4) -- (1,5);
\begin{scope} [shift={(0.5,0.3)}]
\node at (1,3) {$\lambda_1$};  
\node at (2,2) {$\lambda_1$};
\node at (2,3) {$\lambda_2$};
\node at (3,1) {$\lambda_1$};
\node at (3,2) {$\lambda_2$};
\node at (3,3) {$\lambda_3$};
\node at (4,0) {$0$};
\end{scope}
    \end{scope}
    \node at (5,-1.75) {\(\Leftrightarrow\)};
\end{tikzpicture}
\begin{tikzpicture}[scale=1.3]
\SLHlabel{4}
\SLH{4}
\draw [red, very thick] (3,0) -- (3,2/4) -- (2,1/3) -- (2,1) -- (1,1) -- (1,2);
\begin{scope}[shift={(0.5,0.1)}]
      \small
      \node at (0,0.3) {\( 0 \)};
      \node at (1,0.2) {\( 0\)};
      \node at (1,0.6) {\( \lambda_1\)};
      \node at (2,0) {\( 0 \)};
      \node at (2,1/3) {\( \lambda_1 \)};
      \node at (2,2/3) {\( \lambda_2\)};
      \node at (3,0) {\(0 \)};
      \node at (3,1/4) {\( \lambda_1\)};
      \node at (3,2/4) {\( \lambda_2\)};
      \node at (3,3/4) {\( \lambda_3 \)};
      \node at (0,1) {\( \lambda_1 \)};
      \node at (1,1) {\( \lambda_2 \)};
      \node at (1,3/2) {\( \lambda_3 \)};
      \node at (2,1) {\( \lambda_3 \)};
      \node at (2,4/3) {\( \lambda_4\)};
      \node at (2,5/3) {\( \lambda_5 \)};
      \node at (3,1) {\( \lambda_4 \)};
      \node at (3,5/4) {\( \lambda_5 \)};
      \node at (3,6/4) {\( \lambda_6 \)};
      \node at (3,7/4) {\( \lambda_7 \)};
    \end{scope}
\end{tikzpicture}
\end{subfigure}

\par\bigskip\bigskip

\begin{subfigure}[b]{\textwidth}
\centering
\begin{tikzpicture}[scale = 0.5]
\draw(-0.25,-0.25) grid (7.5,3.5);
\foreach \x in {0,1,...,7} { \node [anchor=north] at (\x,-0.3) {\x}; }
\foreach \y in {0,1,...,3} { \node [anchor=east] at (-0.3,\y) {\y}; }
\node at (8,1.5) {\(\Leftrightarrow\)};
\draw [red, very thick] (0,0) -- (2,2) -- (3,1) -- (5,3) -- (7,1);
\node [anchor=north] at (5.75,3.1) {$\lambda_3$};
\node [anchor=north] at (2.75,2.1) {$\lambda_2$};
\node [anchor=north] at (6.75,2.1) {$\lambda_2$};
\end{tikzpicture}
\begin{tikzpicture}[scale=0.5]
    \TRIL{4}
    \begin{scope} [shift={(0.5,0.3)}]
    \node at (0,0) {$\lambda_2$};
    \node at (1,0) {$\lambda_3$};
    \node at (1,1) {$\lambda_4$};
    \node at (2,0) {$\lambda_4$};
    \node at (2,1) {$\lambda_5$};
    \node at (2,2) {$\lambda_6$};
    \node at (3,0) {$\lambda_5$};
    \node at (3,1) {$\lambda_6$};
    \node at (3,2) {$\lambda_7$};
    \node at (3,3) {$\lambda_8$};
    \node at (4,-0.2) {$1$};
    \node at (4,3.6) {$2$};
    \end{scope}
    \begin{scope}[shift={(-1,-5)}]
    \TRIDL{5} 
     \pgfmathsetmacro{\m}{4};
    \foreach \i in {0,...,\m}
    {
    \node at (\i+1,-.3) {\i};
    }
    \draw [red, very thick] (4,1) -- (4,3) -- (3,3) -- (3,5) -- (1,5);
\begin{scope} [shift={(0.5,0.3)}]
\node at (1,4) {$\lambda_1$};  
\node at (2,3) {$\lambda_1$};
\node at (2,4) {$\lambda_2$};
\node at (3,2) {$\lambda_1$};
\node at (3,3) {$\lambda_2$};
\node at (3,4) {$\lambda_3$};
\node at (4,1) {$\lambda_1$};
\node at (4,2) {$\lambda_2$};
\node at (4,3) {$\lambda_3$};
\node at (4,4) {$\lambda_4$};
\node at (5,0) {$0$};
\end{scope}
    \end{scope}
    \node at (5,-3.15) {\(\Leftrightarrow\)};
\end{tikzpicture}
\begin{tikzpicture}[scale=1.3]
\SLHlabel{4}
\SLH{4}
\draw [red, very thick] (3,0) -- (3,2/4) -- (2,1/3) -- (2,1) -- (0,1) -- (0,2);
\begin{scope}[shift={(0.5,0.1)}]
      \small
      \node at (0,0.3) {\( \lambda_1 \)};
      \node at (1,0.2) {\( \lambda_1\)};
      \node at (1,0.6) {\( \lambda_2\)};
      \node at (2,0) {\( \lambda_1 \)};
      \node at (2,1/3) {\( \lambda_2 \)};
      \node at (2,2/3) {\( \lambda_3\)};
      \node at (3,0) {\(\lambda_1 \)};
      \node at (3,1/4) {\( \lambda_2\)};
      \node at (3,2/4) {\( \lambda_3\)};
      \node at (3,3/4) {\( \lambda_4 \)};
      \node at (0,1) {\( \lambda_2 \)};
      \node at (1,1) {\( \lambda_3 \)};
      \node at (1,3/2) {\( \lambda_4 \)};
      \node at (2,1) {\( \lambda_4 \)};
      \node at (2,4/3) {\( \lambda_5\)};
      \node at (2,5/3) {\(\lambda_6\)};
      \node at (3,1) {\( \lambda_5 \)};
      \node at (3,5/4) {\( \lambda_6 \)};
      \node at (3,6/4) {\( \lambda_7 \)};
      \node at (3,7/4) {\( \lambda_8\)};
    \end{scope}
\end{tikzpicture}
\end{subfigure}
\caption{Bijection among a Motzkin path, a path in staircase grid, and a path in the symmetric lecture hall graph. The weights of each east step are indicated in the staircase grid and the symmetric lecture hall graph.}
\label{fig:bijection}
\end{figure}

\begin{thm}
\label{thm:dyck_slgh}
In the three-term recurrence relation, let $b_n=0$ for $n\geq 0$, $\lambda_0=0$,  and $\lambda_n\neq 0$ for $n\geq 1$. Choose the weights $w_e$ and $w_o$ of the symmetric lecture hall graph to be
\begin{equation}
\label{eq:we_wo_gen}
    w_e(t;i,j) = \begin{cases}
        \lambda_j, & t=0,\\
        \lambda_{i+j+1}, & t=1,
    \end{cases}
    \qquad
    w_o(t;i,j) = \begin{cases}
      \lambda_{j+1}, & t=0,\\
      \lambda_{i+j+2}, & t=1.
    \end{cases}
\end{equation}
Then,
\begin{align*}
    \sigma_{2n,2k} &= \sum_{p:(0,0)\to (2n,2k)}w_{\Mot}(p) = \sum_{p:(k,2)\to (n,0)} w_e(p),\\
    \sigma_{2n+1,2k+1} &= \sum_{p:(0,0)\to (2n+1,2k+1)}w_{\Mot}(p) = \sum_{p:(k,2)\to (n,0)} w_o(p).
\end{align*}
\end{thm}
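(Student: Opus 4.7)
The plan is to prove the two stated equalities in turn. The first equality in each, $\sigma_{2n,2k}=\sum_{p:(0,0)\to(2n,2k)}w_{\Mot}(p)$ and $\sigma_{2n+1,2k+1}=\sum_{p:(0,0)\to(2n+1,2k+1)}w_{\Mot}(p)$, is an immediate consequence of Viennot's Theorem \ref{thm:lin_coef} with $m=0$: because $b_n=0$, any Motzkin path containing a horizontal step has weight zero, so the generating function collapses to a Dyck-path sum, and parity forces the only non-vanishing mixed moments to be those listed. The real work is the second equality in each case.

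For the even case, I would induct on $n$, showing that both $\sigma_{2n,2k}$ and $F_e(n,k):=\sum_{p:(k,2)\to(n,0)}w_e(p)$ satisfy the same recurrence. On the moment side, the three-term relation $xp_n=p_{n+1}+\lambda_n p_{n-1}$ (using $b_n=0$) gives $\sigma_{n+1,k}=\sigma_{n,k-1}+\lambda_{k+1}\sigma_{n,k+1}$, and iterating twice yields
\begin{equation*}
\sigma_{2n,2k}=\sigma_{2n-2,2k-2}+(\lambda_{2k}+\lambda_{2k+1})\sigma_{2n-2,2k}+\lambda_{2k+1}\lambda_{2k+2}\sigma_{2n-2,2k+2}.
\end{equation*}
On the SLH side, any path in $\SE((k,2)\to(n,0))$ crosses from level $t=1$ to level $t=0$ via a unique south step $v^1_{m,0}\to v^0_{m,m}$ with $k\le m\le n$, giving the convolution $F_e(n,k)=\sum_{m=k}^{n}h^{w_e}_{m,k}\,g^{w_e}_{n,m}$. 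I would then expand $h$ using Lemma \ref{lem:h_level2} and $g$ using Lemma \ref{lem:g_level1}, and substitute the explicit weights $w_e(0;i,j)=\lambda_j$ (under which $w_e^-=w_e$ on level $0$) and $w_e(1;i,j)=\lambda_{i+j+1}$ (which satisfies $w_e(1;n-1,0)=\lambda_n$) to extract the same second-order recurrence for $F_e(n,k)$. The base case $n=k=0$ is trivial: the unique path $v^1_{0,0}\to v^0_{0,0}$ has no east step, so $F_e(0,0)=1=\sigma_{0,0}$.

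The odd case proceeds by the same strategy with $w_o$ in place of $w_e$, yielding the analogous recurrence
\begin{equation*}
\sigma_{2n+1,2k+1}=\sigma_{2n-1,2k-1}+(\lambda_{2k+1}+\lambda_{2k+2})\sigma_{2n-1,2k+1}+\lambda_{2k+2}\lambda_{2k+3}\sigma_{2n-1,2k+3},
\end{equation*}
which is matched against the SLH side via the companion forms $w_o(0;i,j)=\lambda_{j+1}$ and $w_o(1;i,j)=\lambda_{i+j+2}$. The main obstacle is the weight bookkeeping: Lemmas \ref{lem:g_level1} and \ref{lem:h_level2} each introduce shifted weight systems $w^-$ and $w^+$, so iterating them naively produces generating functions for different weight systems. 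The choices of $w_e$ and $w_o$ in \eqref{eq:we_wo_gen} are engineered precisely so that these shifts preserve or consistently relabel the weights, allowing the induction to close and reproducing the $\lambda$-factors in the $\sigma$-recurrences. Equivalently, as illustrated in \Cref{fig:bijection}, one could give a direct bijective proof by composing the identifications of the two levels of the SLH graph with staircase lattices, so that each down step at height $h$ in a Dyck path corresponds to an east step in the SLH graph carrying the appropriate $\lambda$-weight.
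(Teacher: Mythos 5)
Your treatment of the first equality (Viennot's \Cref{thm:lin_coef} with $m=0$, horizontal steps killed by $b_n=0$) is what the paper takes for granted, and your convolution $F_e(n,k)=\sum_{m=k}^{n}h^{w_e}_{m,k}\,g^{w_e}_{n,m}$ at the unique level-crossing south step is exactly the decomposition appearing in the third line of the paper's displayed chain. But your main route to the second equality --- matching the second-order recurrence for $\sigma_{2n,2k}$ against one for $F_e(n,k)$ extracted from \Cref{lem:g_level1,lem:h_level2} --- is not the paper's, and the step you defer (``extract the same second-order recurrence for $F_e(n,k)$'') is precisely where the difficulty sits. The two sides of your target recurrence live on different graphs: $F_e(n,k)$ sums over paths ending at $(n,0)$, while $F_e(n-1,k-1)$, $F_e(n-1,k)$, $F_e(n-1,k+1)$ sum over paths ending at $(n-1,0)$. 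Peeling the last two steps off the Dyck path at its height-$2k$ end corresponds, under the reflection, to removing the initial portion of the lecture hall path near $(k,2)$, which leaves a path that still ends at $(n,0)$; identifying it with a path into $(n-1,0)$ requires translating the whole configuration, and that is weight-preserving only because $w_e(0;i,j)=\lambda_j$ is independent of $i$ and $w_e(1;i,j)=\lambda_{i+j+1}$ is constant along antidiagonals. You never verify this, nor that the shifted systems produced by the lemmas return to $w_e$: at level $1$ the shift $w^{+}$ sends $\lambda_{i+j+1}$ to $\lambda_{i+j+3}$, which is neither $w_e$ nor $w_o$. As written, the induction does not close.

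The paper's actual proof is your closing ``equivalently'' remark, carried out in detail: reflect the Dyck path in the vertical axis, rotate by $45^{\circ}$ to get a south--east path from $(k,2)$ to $(n,0)$ in the symmetric staircase grid (so the Motzkin weight is preserved), split at the level crossing, and use the identifications of the two levels with staircase lattices (\Cref{fig:bijection}) to see that the transported step weights are exactly $w_e$ (resp.\ $w_o$). This is a direct weight-preserving bijection and avoids all the recurrence bookkeeping. If you want to salvage your route, you would need to (i) state and prove the translation-invariance of $w_e$ and $w_o$ explicitly and (ii) track the weight shifts through \Cref{lem:g_level1,lem:h_level2}; the paper only does accounting of that kind afterwards, in \Cref{prop:gh_eo} and \Cref{cor:rec}, once the theorem is already in hand.
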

\begin{proof}
  First consider the case $\sigma_{2n,2k}$. Reflect the Motzkin path along the vertical axis and rotate by $45^{\circ}$ in the clockwise direction to obtain a new path $p$ that starts at $x=k$ and ends at $x=n$ with south and east steps, see the first bijection in \Cref{fig:bijection}. Denote the weight of the rotated path to be $\tilde{w}(p)$ (note that $\tilde{w}(p) = w_{\Mot}(p)$). The Motzkin path is translated to a path $p$ in the symmetric staircase grid such that it starts at $(k,2)$ and ends at $(n,0)$. 
  Now, 
\begin{align*}
    \sigma_{2n,2k} &= \sum_{p:(0,0)\to (2n,2k)}w_{\Mot}(p) \\
    & = \sum_{p:(k,2)\to (n,0)}\tilde{w}(p)\\
    & = \sum_{r=k}^n\left(\sum_{p:(k,2)\to (r,1)}\tilde{w}(p)\right)\left(\sum_{p:(r,1)\to (n,0)}\tilde{w}(p)\right)\\
    &= \sum_{r=k}^n\left(\sum_{p:(k,2)\to (r,1)}w_e(p)\right)\left(\sum_{p:(r,1)\to (n,0)}w_e(p)\right)\\
    & = \sum_{p:(k,2)\to (n,0)}w_e(p),
\end{align*} 
  where in the fourth line, we used the bijection between the lecture hall graph and its dual with the staircase grid, see the second bijection in \Cref{fig:bijection}. The case $\sigma_{2n+1,2k+1}$ can be proved in the same way.
\end{proof}

\begin{prop}
\label{prop:gh_eo}
    Consider weight systems $w_e$ and $w_o$ as given in \Cref{thm:dyck_slgh}. Also, consider $g^w_{n,k}$ and $h^w_{n,k}$ as denoted in \Cref{lem:g_level1} and $\Cref{lem:h_level2}$, respectively, when $w$ is either $w_e$ or $w_o$. Then,
\begin{align*}
    g^{w_e}_{n,k} &= \sigma_{2n-k-1,k-1}, \quad h^{w_e}_{n,k} = \sigma_{n,n,2k},\\
    g^{w_0}_{n,k} &= \sigma_{2n-k,k}, \quad h^{w_0}_{n,k} = \sigma_{n+1,n+1,2k+1}.
\end{align*}
\end{prop}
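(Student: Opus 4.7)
My plan is to prove all four identities by induction on $n$, matching the recursions of Lemmas~\ref{lem:g_level1} and~\ref{lem:h_level2} term by term against the Motzkin-path recurrences for $\sigma_{m,j}$ and $\sigma_{n,m,k}$. When $b_n=0$, conditioning on the last (resp.\ first) step gives
$$
\sigma_{m,j}=\sigma_{m-1,j-1}+\lambda_{j+1}\,\sigma_{m-1,j+1},\qquad
\sigma_{n,m,k}=\sigma_{n-1,m+1,k}+\lambda_m\,\sigma_{n-1,m-1,k}.
$$

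The $g$-identities~(a) and~(c) come first and are the easier pair. The level-$0$ weights $w_e$ and $w_o$ from \eqref{eq:we_wo_gen} depend only on $j$, so $w_e^-=w_e$ and $w_o^-=w_o$ in Lemma~\ref{lem:g_level1}. The recursion then reads $g^{w_e}_{n,k}=\lambda_k\,g^{w_e}_{n,k+1}+g^{w_e}_{n-1,k-1}$, which is exactly the last-step recurrence for $\sigma_{2n-k-1,k-1}$ under the substitution $(m,j)=(2n-k-1,k-1)$; the parallel computation with $w_o(0;k,k)=\lambda_{k+1}$ matches $\sigma_{2n-k,k}$. The boundary values $g^w_{n,n}=1$ (a pure south column) and $g^{w_e}_{n,0}=0$ for $n\ge 1$ (because $w_e(0;0,0)=\lambda_0=0$) agree with $\sigma_{n-1,n-1}=1$ and $\sigma_{2n-1,-1}=0$, closing the induction.

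The $h$-identities~(b) and~(d) are more delicate because Lemma~\ref{lem:h_level2} couples $h^w$ to the shifted weight system $h^{w^+}$, where $w^+(\cdot;i,j)=w(\cdot;i+1,j+1)$. For $w_e$ one has $w_e^+(1;i,j)=\lambda_{i+j+3}$, i.e.\ the same functional form shifted by $2$ in the $\lambda$-index. I therefore plan to strengthen the inductive hypothesis to the one-parameter family
$$
h^{w_e^{[s]}}_{n,k}=\sigma_{n,\,n+s,\,2k+s},\qquad s\ge 0,
$$
with $w_e^{[s]}(1;i,j):=\lambda_{i+j+1+s}$, so that $w_e=w_e^{[0]}$ and $w_e^+=w_e^{[2]}$; claim~(b) is then the case $s=0$. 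Reading the apparent typo $w(0;n-1,0)$ in Lemma~\ref{lem:h_level2} as $w(1;n-1,0)=\lambda_{n+s}$, the recursion becomes $h^{w_e^{[s]}}_{n,k}=\lambda_{n+s}\,h^{w_e^{[s]}}_{n-1,k}+h^{w_e^{[s+2]}}_{n-1,k-1}$, which matches the first-step recurrence for $\sigma_{n,n+s,2k+s}$ term by term, with base case $h^{w_e^{[s]}}_{0,0}=1=\sigma_{0,s,s}$. Claim~(d) is handled by the analogous family $h^{w_o^{[s]}}_{n,k}=\sigma_{n,n+1+s,2k+1+s}$; note that the right-hand side of~(d) as written, $\sigma_{n+1,n+1,2k+1}$, vanishes by the parity rule ($\sigma_{n,m,k}=0$ unless $n+m+k$ is even when $b_n=0$), and the intended identity appears to be $h^{w_o}_{n,k}=\sigma_{n,n+1,2k+1}$.

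The main obstacle is precisely this shift-strengthening: because Lemma~\ref{lem:h_level2} feeds $h^w$ into a genuinely different weight system, one cannot stay within a single family $\{h^{w_e}_{n,k}\}$ but must track all shifts $\{h^{w_e^{[s]}}_{n,k}\}_{s\ge 0}$ (and analogously for $w_o$) simultaneously. Once this setup is in place, the rest is a mechanical bookkeeping of $\lambda$-indices.
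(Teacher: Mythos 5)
Your argument is correct but takes a genuinely different route from the paper's. The paper proves all four identities bijectively: for the $g$-identities it uses translational invariance of the Motzkin weights to identify a level-one path from $(k,1)$ to $(n,0)$ with a full symmetric lecture hall path already counted by \Cref{thm:dyck_slgh} (splitting into the cases $k$ odd and $k$ even, which is why $w_e$ gets matched with both $\sigma_{2n-2k-2,2k}$ and $\sigma_{2n-2k-1,2k-1}$ before recombining), and for the $h$-identities it extends a path ending at $(n,1)$ by the forced south column down to $(n,0)$, which freezes the initial run of up steps of the corresponding Motzkin path. Your induction on the recursions of \Cref{lem:g_level1} and \Cref{lem:h_level2} is a legitimate alternative; its one genuine extra idea, which you supply, is the strengthened one-parameter family $h^{w_e^{[s]}}_{n,k}=\sigma_{n,n+s,2k+s}$, forced on you because \Cref{lem:h_level2} couples $h^{w}$ to the shifted system $w^{+}$, and your reading of $w(0;n-1,0)$ there as $w(1;n-1,0)$ is clearly the intended one. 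The paper's route avoids this bookkeeping but leans on the unproved figure-level bijections of \Cref{fig:bijection}, while yours leans on the unproved recursions of the lemmas, so the two are comparable in rigour. You are also right that the printed identity $h^{w_o}_{n,k}=\sigma_{n+1,n+1,2k+1}$ cannot hold: with $b_n=0$ one has $\sigma_{N,M,K}=0$ unless $N+M+K$ is even, so that right-hand side vanishes identically, whereas for instance $h^{w_o}_{1,0}=\lambda_2=\sigma_{1,2,1}$; your correction $\sigma_{n,n+1,2k+1}$ is the one produced by the paper's own extension argument (the forced tail has $n+1$ south steps in the odd case, trimming a length-$(2n+1)$ path down to length $n$ starting at height $n+1$), and it is also what makes the second identity of \Cref{cor:rec} true, which as printed likewise evaluates to zero. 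The only loose end in your write-up is the boundary of the downward induction in $k$ for $w_o$: at $k=1$ you need $g^{w_o}_{n-1,0}=\sigma_{2n-2,0}$, which is the $k=0$ instance of the same identity and follows from $g^{w_o}_{n,0}=\lambda_1 g^{w_o}_{n,1}$ together with $\sigma_{2n,0}=\lambda_1\sigma_{2n-1,1}$; you only verify the $w_e$ boundary there.
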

\begin{proof}
The weight of a SE step in Motzkin paths depends only on height $k\in\mathbf{Z_{\geq 0}}$ above the $x$-axis. Therefore, the weight of the path between the two levels $k$ and $k'$ remains the same if the path is shifted to the right arbitrarily by $x$ units. We can use the translational invariance of the Motzkin path to prove the proposition. 

First, consider the weight system $w_e$ in the symmetric lecture hall graph. Denote the weight of the symmetric staircase grid in the even and odd cases to be $\tilde{w}_e$ and $\tilde{w}_o$. For $k\geq 0$, the weight $w_e(p)$ of a path $p$ from $(2k+1,1)$ to $(n,0)$ is same as the weight $\tilde{w}_e(p')$ where $p'$ is a path from $(k,2)$ to $(n-k-1,0)$. Therefore,
\begin{equation}
\label{eq:g_sigma1}
    g^{w_e}_{n,2k+1} = \sum_{p:(2k+1,1)\to (n,0)} w_e(p)=\sum_{p':(k,2)\to (n-k-1,0)} \tilde{w}_e(p')= \sigma_{2n-2k-2,2k}.
\end{equation}
Similarly, for $k\geq 1$, the weight $w_e(p)$ of a path $p$ from $(2k,1)$ to $(n,0)$ is same as the weight $\tilde{w}_0(p')$ where $p'$ is a path from $(k-1,2)$ to $(n-k-1,0)$. Therefore,
\begin{equation}
\label{eq:g_sigma2}
    g^{w_e}_{n,2k}=\sum_{p:(2k,1)\to (n,0)} w_e(p)=\sum_{p':(k-1,2)\to (n-k-1,0)} \tilde{w}_o(p')= \sigma_{2n-2k-1,2k-1}.
\end{equation}
By combining \Cref{eq:g_sigma1,eq:g_sigma2}, for $k\geq 1$ we obtain
\begin{equation*}
    g^{w_e}_{n,k} =\sigma_{2n-k-1,k-1}.
\end{equation*}

To compute $h_{n,k}^{w_e}$ note that a path $p$ from $ (k,2)$ to $(n,1)$ can be extended to $(n,0)$ with the last $n+1$ south steps. In the case of Motzkin paths, it translates to a path that has the first $n$  NE steps fixed. So the Motzkin path starts at level $n$ and ends at level $2k$ and the length of the path is $n$. Therefore, $h_{n,k}^{w_e}$ can be written as
\begin{equation}
    h_{n,k}^{w_e} = \sum_{p:n\leadsto 2k,\,  |p|=n} w_{\Mot}(p) = \sigma_{n,n,2k}.
\end{equation}
The results for the $w_o$ case can be computed in the same way. 
\end{proof}

\begin{cor}
\label{cor:rec}
    We have 
    \begin{align*}
        \sigma_{2n,2k} &= \sum_{r=k}^n \sigma_{2n-r-1,r-1} \sigma_{r,r,2k},\\
        \sigma_{2n+1,2k+1} & = \sum_{r=k}^n\sigma_{2n-r,r}\sigma_{r+1,r+1,2k+1}.
    \end{align*}
\end{cor}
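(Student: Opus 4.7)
The proof is a one-line application of the machinery developed in Theorem \ref{thm:dyck_slgh} and Proposition \ref{prop:gh_eo}, based on splitting a symmetric lecture hall path at its level-crossing step. My plan is to start from the path identity of Theorem \ref{thm:dyck_slgh} and decompose every path at the unique south step where it leaves the $t=1$ region and enters the $t=0$ region.

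For the even case, Theorem \ref{thm:dyck_slgh} expresses $\sigma_{2n,2k}$ as the sum of $w_e(p)$ over paths $p \in \SE((k,2)\to(n,0))$. Such a path begins at $v^1_{k,k}$ and must exit to the $t=0$ level via a south step of the form $(v^1_{r,0}, v^0_{r,r})$ for a unique column index $r$; the requirement that the path stays in the graph and terminates at $v^0_{n,0}$ forces $k \le r \le n$. Splitting at this crossing decomposes the path into an upper segment in $\SE((k,2)\to(r,1))$ and a lower segment in $\SE((r,1)\to(n,0))$, and since weights are multiplicative and south steps carry no weight, we obtain
\begin{equation*}
\sigma_{2n,2k} \;=\; \sum_{r=k}^{n} \Bigl(\sum_{p:(k,2)\to(r,1)} w_e(p)\Bigr) \Bigl(\sum_{p:(r,1)\to(n,0)} w_e(p)\Bigr) \;=\; \sum_{r=k}^{n} h^{w_e}_{r,k}\, g^{w_e}_{n,r}.
\end{equation*}
Substituting the identifications $h^{w_e}_{r,k} = \sigma_{r,r,2k}$ and $g^{w_e}_{n,r} = \sigma_{2n-r-1,r-1}$ from Proposition \ref{prop:gh_eo} yields the first stated identity.

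The odd case is handled identically: applying the same path decomposition to $w_o$-weighted paths counted by Theorem \ref{thm:dyck_slgh} gives $\sigma_{2n+1,2k+1} = \sum_{r=k}^{n} h^{w_o}_{r,k}\, g^{w_o}_{n,r}$, and then the identifications $h^{w_o}_{r,k} = \sigma_{r+1,r+1,2k+1}$ and $g^{w_o}_{n,r} = \sigma_{2n-r,r}$ from Proposition \ref{prop:gh_eo} finish the proof. I do not expect any genuine obstacle: the only point needing mild care is verifying that the summation range $k \le r \le n$ is forced by the column indices of the two sub-paths, which is immediate once the crossing south step is identified.
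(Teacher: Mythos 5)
Your proposal is correct and follows essentially the same route as the paper: both start from Theorem~\ref{thm:dyck_slgh}, split each symmetric lecture hall path at the unique crossing from the $t=1$ level to the $t=0$ level at some column $r$ (giving the factorization $\sum_{r=k}^{n}h^{w}_{r,k}\,g^{w}_{n,r}$), and then substitute the identifications from Proposition~\ref{prop:gh_eo}. The paper additionally sketches an equivalent reinterpretation directly in terms of cutting a Motzkin path at a NE step and using translational invariance, but that is supplementary to the argument you gave.
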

\begin{proof}
First consider the $\sigma_{2n,2k}$ case. By \Cref{thm:dyck_slgh} and \Cref{prop:gh_eo},
\begin{align*}
    \sigma_{2n,2k} &= \sum_{p:(k,2)\to (n,0)} w_e(p)= \sum_{r=k}^n\left(\sum_{p:(k,2)\to (r,1)}w_e(p)\right)\left(\sum_{p:(r,1)\to (n,0)}w_e(p)\right)= \sum_{r=k}^n \sigma_{2n-r-1,r-1}\sigma_{r,r,2k}.
\end{align*}
The recurrence relation can also be interpreted as follows. A Motzkin path from $(0,0)$ to $(2n,2k)$ can be divided into two paths, one from $(0,0)$ to $(2n-r-1,r-1)$ and the other from $(2n-r,r)$ to $(2n,2k)$ where a NE step connects the two paths. Using the translational invariance, the weight remains the same if the second part of the path is shifted left by $2n-2r$ units. Now, summing over all possible values of $r$ results in the recurrence relation i.e.
\begin{align*}
    \sigma_{2n,2k} & = \sum_{p:(0,0)\to (2n,2k)}w_{\Mot}(p)\\
    &= \sum_{r=k}^n\left(\sum_{p:(0,0)\to (2n-r-1,r-1)}w_{\Mot}(p)\right)\left(\sum_{p:(2n-r,r)\to (2n,2k)}w_{\Mot}(p)\right)\\
    &= \sum_{r=k}^n\left(\sum_{p:(0,0)\to (2n-r-1,r-1)}w_{\Mot}(p)\right)\left(\sum_{p:(r,r)\to (2r,2k)}w_{\Mot}(p)\right)\\
    &=\sum_{r=k}^n \sigma_{2n-r-1,r-1} \sigma_{r,r,2k}.
\end{align*}
The recurrence relation for $\sigma_{2n+1,2k+1}$ can be proved in the same way. 
\end{proof}

When $b_n\neq 0$ in \eqref{eq:3-term}, we have the following weight preserving bijection where Motzkin paths with all three steps, namely NE, E, and SE, are mapped to paths with only up and down steps. The NE step is mapped to a sequence of two up steps, the east step at level $k$ with weight $b_k$ is mapped to either a sequence of up and down steps at level $2k$ with $\gamma_{2k+1}$ as the weight of the down step or a sequence of down step at level $2k$ with weight $\gamma_{2k}$ and an up step.  The SE step from level $k$ to $k-1$ with weight $\lambda_k$ is mapped to two consecutive down steps from level $2k$ to $2k-2$ with weights $\gamma_{2k}$ and $\gamma_{2k-1}$, respectively. We choose the weights such that $b_k=\gamma_{2k+1}+\gamma_{2k}$ and $\lambda_k=\gamma_{2k}\gamma_{2k-1}$. See \Cref{fig:bijection_bn} for an example.  

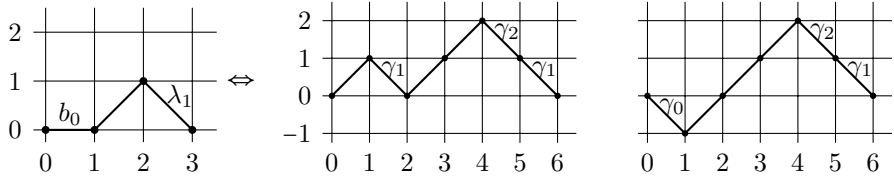
\begin{figure}[h]
\centering
\begin{tikzpicture}[scale = 0.65]
\draw(-0.25,-0.25) grid (3.5,2.5);
\foreach \x in {0,1,...,3} { \node [anchor=north] at (\x,-0.3) {\x}; }
\foreach \y in {0,1,2} { \node [anchor=east] at (-0.3,\y) {\y}; }
\draw [thick] (0,0) -- (1,0) -- (2,1)-- (3,0);
\filldraw[black](0,0)circle[radius=2pt];
\filldraw[black](1,0)circle[radius=2pt];
\filldraw[black](2,1)circle[radius=2pt];
\filldraw[black](3,0)circle[radius=2pt];
\node [anchor=north] at (0.5,0.75) {$b_0$};
\node [anchor=north] at (2.75,1.12) {$\lambda_1$};
 \node at (4,1) {\(\Leftrightarrow\)};
\end{tikzpicture}
\begin{tikzpicture}[scale = 0.5]
\draw(-0.25,-0.25) grid (6.5,3.5);
\foreach \x in {0,1,...,6} { \node [anchor=north] at (\x,-0.3) {\x}; }
\foreach \y in {0,...,3} { \node [anchor=east] at (-0.3,\y) {$\the\numexpr \y - 1\relax$}; }
\draw [thick] (0,1) -- (1,2) -- (2,1) -- (4,3) -- (6,1);
\filldraw[black](0,1)circle[radius=2pt];
\filldraw[black](1,2)circle[radius=2pt];
\filldraw[black](2,1)circle[radius=2pt];
\filldraw[black](3,2)circle[radius=2pt];
\filldraw[black](4,3)circle[radius=2pt];
\filldraw[black](5,2)circle[radius=2pt];
\filldraw[black](6,1)circle[radius=2pt];
\node [anchor=east] at (2.25,1.7) {$\gamma_1$};
\node [anchor=east] at (5.25,2.7) {$\gamma_2$};
\node [anchor=east] at (6.25,1.7) {$\gamma_1$};
\end{tikzpicture}
\hspace{0.5cm}
\begin{tikzpicture}[scale = 0.5]
\draw(-0.25,-0.25) grid (6.5,3.5);
\foreach \x in {0,1,...,6} { \node [anchor=north] at (\x,-0.3) {\x}; }
\draw [thick] (0,1) -- (1,0) -- (2,1) -- (4,3) -- (6,1);
\filldraw[black](0,1)circle[radius=2pt];
\filldraw[black](1,0)circle[radius=2pt];
\filldraw[black](2,1)circle[radius=2pt];
\filldraw[black](3,2)circle[radius=2pt];
\filldraw[black](4,3)circle[radius=2pt];
\filldraw[black](5,2)circle[radius=2pt];
\filldraw[black](6,1)circle[radius=2pt];
\node [anchor=east] at (1.25,0.7) {$\gamma_0$};
\node [anchor=east] at (5.25,2.7) {$\gamma_2$};
\node [anchor=east] at (6.25,1.7) {$\gamma_1$};
\end{tikzpicture}
\caption{A Motzkin path on the left with east, up and down steps is mapped to Motzkin paths with only up and down steps. The weight of each step is indicated.}
\label{fig:bijection_bn}
\end{figure}              

\begin{thm}
\label{thm:mot_slgh}
In the three-term recurrence relation, let $b_n\neq 0$ for $n\geq 0$, $\lambda_0=0$,  and $\lambda_n\neq 0$ for $n\geq 1$. By expressing $b_k=\gamma_{2k}+\gamma_{2k+1}$ and $\lambda_k=\gamma_{2k}\gamma_{2k-1}$, choose the weights $w$ of the symmetric lecture hall graph as
\begin{equation}
\label{eq:w_gen}
    w(t;i,j) = \begin{cases}
        \gamma_j, & t=0,\\
        \gamma_{i+j+1}, & t=1.
    \end{cases}
\end{equation}
Then,
\begin{align*}
    \sigma_{n,k} = \sum_{p:(k,2)\to (n,0)} w(p).
\end{align*}
\end{thm}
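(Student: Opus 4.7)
The plan is to reduce the statement to \Cref{thm:dyck_slgh} via the standard $2$-colouring bijection already sketched in \Cref{fig:bijection_bn}. The idea is to expand every step of a Motzkin path of length $n$ into two consecutive steps of a Dyck-type path of length $2n$, thereby converting a Motzkin problem with nonzero horizontal weights $b_k$ into one with $b'_k = 0$, to which \Cref{thm:dyck_slgh} applies directly.

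Concretely, I would define a map $\Phi$ sending a Motzkin path $p\colon (0,0) \to (n,k)$ to a collection of paths of length $2n$ from $(0,0)$ to $(2n,2k)$ using only NE and SE steps, according to the rules: an NE step becomes NE--NE; a horizontal step at height $j$, of weight $b_j = \gamma_{2j}+\gamma_{2j+1}$, is expanded in two ways, NE--SE or SE--NE, whose weights sum to $b_j$; and a SE step from height $j$ to $j-1$, of weight $\lambda_j = \gamma_{2j}\gamma_{2j-1}$, becomes SE--SE. By construction every SE step starting at height $m$ in an image path carries weight $\gamma_m$ and every NE step has weight $1$, and summing the weights of images of $p$ recovers $w_{\Mot}(p)$.

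The images of $\Phi$ are exactly the Motzkin paths for the auxiliary three-term recurrence with $b'_m = 0$ and $\lambda'_m = \gamma_m$; writing $\sigma'$ for its mixed moments, this gives $\sigma_{n,k} = \sigma'_{2n,2k}$. Applying the even-moment half of \Cref{thm:dyck_slgh} to the auxiliary system, with $\lambda_j$ there replaced by $\gamma_j$, produces the weight
$$w_e(t;i,j) = \begin{cases}\gamma_j, & t=0,\\ \gamma_{i+j+1}, & t=1,\end{cases}$$
which coincides with the weight $w(t;i,j)$ in the statement. Therefore
$$\sigma_{n,k} = \sigma'_{2n,2k} = \sum_{p\colon (k,2)\to(n,0)} w(p),$$
as required.

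The hardest part will be verifying that $\Phi$ really produces genuine (i.e., non-negative) Motzkin paths. The one dangerous configuration is the SE--NE colouring of a horizontal step at the ground level $j=0$, which would transiently drop the doubled path to height $-1$ with weight $\gamma_0$. One resolves this with the standard convention for inverting $b_k = \gamma_{2k}+\gamma_{2k+1}$, $\lambda_k = \gamma_{2k}\gamma_{2k-1}$, namely $\gamma_0 = 0$ (so that $b_0 = \gamma_1$); under this convention the offending image paths contribute nothing, and $\Phi$ becomes a weight-preserving bijection onto the relevant objects. Once this normalisation is pinned down, the remainder of the argument is essentially a transcription of \Cref{thm:dyck_slgh}.
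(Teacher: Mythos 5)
Your overall strategy---double each Motzkin step via the two-colouring of horizontal steps and then invoke \Cref{thm:dyck_slgh} for the auxiliary system with $b'_m=0$, $\lambda'_m=\gamma_m$---is the same route the paper takes, and your weight bookkeeping (the specialisation of $w_e$ to \eqref{eq:w_gen}) is right. The gap is in your treatment of the ``dangerous configuration''. You dispose of the ground-level down--up expansion by imposing the normalisation $\gamma_0=0$, so that the image paths dipping to height $-1$ carry weight zero. But the theorem is stated for an arbitrary factorisation $b_k=\gamma_{2k}+\gamma_{2k+1}$, $\lambda_k=\gamma_{2k}\gamma_{2k-1}$, and the paper's application to the hydrogen atom takes $\gamma_{2k}=k+\alpha$, i.e.\ $\gamma_0=\alpha\neq 0$; correspondingly the bottom-most east steps of the graph carry the genuinely nonzero weight $w(0;i,0)=\gamma_0$ (see \Cref{fig:lag_slhg}). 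With $\gamma_0\neq 0$ your identity $\sigma_{n,k}=\sigma'_{2n,2k}$ already fails for $n=2$, $k=0$: one has $\sigma_{2,0}=b_0^2+\lambda_1=\gamma_0^2+2\gamma_0\gamma_1+\gamma_1^2+\gamma_1\gamma_2$, whereas the genuine Dyck paths of the auxiliary system contribute only $\gamma_1^2+\gamma_1\gamma_2$. The terms you discard are exactly the ones the theorem needs.

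The correct handling (implicit in \Cref{fig:bijection_bn}, whose second image path visibly descends to level $-1$ with a step of weight $\gamma_0$) is to keep the dipping paths: the image of the doubling map is not the set of Motzkin paths of any orthogonal-polynomial system with $b'=0$, but the larger family of up--down paths that may touch height $-1$ through a down--up pair produced by a ground-level horizontal step. Consequently \Cref{thm:dyck_slgh} cannot be cited as a black box; one must rerun its reflection-and-rotation argument for this enlarged family and check that the extra paths land precisely on the bottom-most east steps of the $t=0$ (dual lecture hall) level, which is where the weight $\gamma_0$ sits. Done this way, no normalisation of $\gamma_0$ is required and the statement follows in the generality in which it is used.
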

\begin{proof}
Using the bijection that maps Motzkin paths with three kinds of steps to paths with only up and down steps, and then using the bijection as shown in \Cref{fig:dlhg_grid} maps a Motzkin path to a path in the symmetric lecture hall graph. Then the theorem can be proved in the same way as \Cref{thm:dyck_slgh}.
\end{proof}

\begin{figure}[h]
\begin{tikzpicture}[scale=1.3]
\SLHlabel{4}
\SLH{4}
\begin{scope}[shift={(0.5,0.1)}]
      \small
      \node at (0,0.3) {\( \gamma_0 \)};
      \node at (1,0.2) {\( \gamma_0\)};
      \node at (1,0.6) {\( \gamma_1\)};
      \node at (2,0) {\( \gamma_0 \)};
      \node at (2,1/3) {\( \gamma_1 \)};
      \node at (2,2/3) {\( \gamma_2\)};
      \node at (3,0) {\(\gamma_0 \)};
      \node at (3,1/4) {\( \gamma_1\)};
      \node at (3,2/4) {\( \gamma_2\)};
      \node at (3,3/4) {\( \gamma_3 \)};
      \node at (0,1) {\( \gamma_1 \)};
      \node at (1,1) {\( \gamma_2 \)};
      \node at (1,3/2) {\( \gamma_3 \)};
      \node at (2,1) {\( \gamma_3 \)};
      \node at (2,4/3) {\( \gamma_4\)};
      \node at (2,5/3) {\( \gamma_5 \)};
      \node at (3,1) {\( \gamma_4 \)};
      \node at (3,5/4) {\( \gamma_5 \)};
      \node at (3,6/4) {\(\gamma_6 \)};
      \node at (3,7/4) {\( \gamma_7 \)};
    \end{scope}
\end{tikzpicture}
\caption{The symmetric lecture hall graph corresponding to the weights in \eqref{eq:w_gen}.}
\label{fig:lag_slhg}
\end{figure}
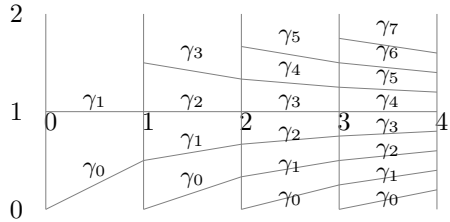

\section{Quantum Harmonic Oscillator}
\label{sec:qho}
Consider a quantum particle of mass $M$ confined to a one-dimensional region oscillating with an angular frequency $\omega$. The Hamiltonian of the particle  $\hat H$ is the sum of the kinetic and the potential energy,
\begin{equation*}
    \hat H = \frac{\hat{p}^2}{2M} + \frac{1}{2}M\omega^2 \hat{x}^2,
\end{equation*}
where $\hat{p}$ and $\hat{x}$ are the momentum and the position operators, respectively. The wavefunctions that solve the time-independent Schr\"{o}dinger equation in the position basis are the Hermite functions, given by
\begin{equation}
    \psi_m(x) = \frac{1}{\sqrt{m!}}\left(\frac{M\omega}{\pi\hbar}\right)^{\frac{1}{4}}e^{-\frac{M\omega}{2\hbar}x^2}H_m\left(\sqrt{\frac{2M\omega}{\hbar}}x\right), \quad m\geq 0.
\end{equation}
Here $H_m(x)$ are monic Hermite polynomials of degree $m$ that satisfy the three-term recurrence relation
\begin{equation*}
H_{m+1}(x) = x H_m(x) -  m H_{m-1}(x).
\end{equation*}
If we choose $M$ and $\omega$ such that $M=1,\,\, \omega=1$ and set $\hbar/2 =1$, the moments of the position operator are given by
\begin{equation}
\label{eq:pos_mom}
    \mathbb{E}[\hat x^n]\vert_{\psi_m} := \int_{-\infty}^\infty x^n |\psi_m(x)|^2 dx = \frac{1}{\sqrt{2\pi}m!}\int_{-\infty}^\infty x^n H_m^2(x) e^{-\frac{x^2}{2}} dx.
\end{equation}
As Hermite functions are the eigenfunctions of the Fourier operator, the moments of the momentum operator $\mathbb{E}[\hat p^n]\vert_{\psi_m}$ are also given by \eqref{eq:pos_mom}. Since the odd moments are zero, we focus on the even moments in \eqref{eq:pos_mom} and provide a combinatorial interpretation using lattice graphs. Applying Viennot's theory for orthogonal polynomials, the moments can be expressed as the generating function of Motzkin paths on the integer lattice. In addition, we give a new combinatorial interpretation in terms of lecture hall graphs.

\subsection{Moments of the position operator}
\subsubsection{Interpretation using Motzkin paths}
In this section, 
we give a quantum mechanical interpretation of Motzkin paths in terms of moments of the position operator of the harmonic oscillator. Note that since $b^H_n=0$ for Hermite polynomials, Motzkin paths consist of only up and down steps. Furthermore,
\begin{equation}
    \label{eq:single_step}
    \int_{-\infty}^\infty\psi_j(x)x\psi_k(x) dx = \frac{1}{\sqrt{2\pi j! k!}}\int_{-\infty}^\infty xH_j(x)H_k(x)e^{-\frac{x^2}{2}} dx = 
    \begin{cases}
        \sqrt{j+1}, & k=j+1,\\
        \sqrt{j}, & k=j-1,\\
        0, & \text{otherwise.}
    \end{cases}
\end{equation}
It is clear that if $k=j+1$, we take an up step from level $j$ to level $j+1$ whose weight is $\sqrt{\lambda^H_{j+1}}=\sqrt{j+1}$. Similarly, if $k=j-1$, we take a down step from level $j$ to level $j-1$ whose weight is $\sqrt{\lambda^H_{j}}=\sqrt{j}$. Unlike the Motzkin paths discussed in \Cref{sec:mot_paths}, we have modified weights where the weight of both the up and the down step between levels $j$ and $j+1$ is $\sqrt{j+1}$. 

\begin{prop}
\label{prop:two_weights}
For a Motzkin path $p$ that starts at an arbitrary level and ends at the same level, let $w$ be the weight system with the weight of the up step equal to 1, the weight of the horizontal step at level $j$ equal to $b_j$ and the weight of the down step from level $j$ to $j-1$ is $\lambda_j$. Consider a different weight system $\tilde{w}$ with the weight of the up and down steps between level $j-1$ and $j$ equal to $\sqrt{\lambda_j}$ and the weight of the horizontal step at level $j$ is $b_j$. Then
\begin{equation*}
    w_{\Mot}(p) = \tilde{w}_{\Mot}(p)
\end{equation*}
\end{prop}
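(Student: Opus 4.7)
The plan is to exploit the simple observation that a Motzkin path returning to its starting level must, between any two adjacent heights $j-1$ and $j$, contain exactly as many up steps as down steps. Once this is established, the two weight systems clearly give the same product.

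More precisely, I would first fix notation: for a Motzkin path $p$ starting and ending at the same level, let $u_j(p)$ denote the number of up steps from level $j-1$ to level $j$, let $d_j(p)$ denote the number of down steps from level $j$ to level $j-1$, and let $h_j(p)$ denote the number of horizontal steps at level $j$. The horizontal steps contribute identically to $w_{\Mot}(p)$ and $\tilde{w}_{\Mot}(p)$, namely $\prod_j b_j^{h_j(p)}$, so they can be ignored.

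The main lemma to prove is that $u_j(p) = d_j(p)$ for all $j \ge 1$. The cleanest argument is a crossing count: consider the horizontal line $y = j - \tfrac{1}{2}$; an up step from level $j-1$ to $j$ is the only step type that crosses this line upward, and a down step from $j$ to $j-1$ is the only step type that crosses it downward. Since $p$ starts and ends at the same height, the signed number of crossings of this line is zero, giving $u_j(p) = d_j(p)$. I expect this to be the main (and essentially only) conceptual step; the remainder is bookkeeping.

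Given the equality $u_j(p) = d_j(p) =: n_j(p)$, the weights factor cleanly. In the system $w$, the contribution of all steps between levels $j-1$ and $j$ is $1^{u_j(p)} \cdot \lambda_j^{d_j(p)} = \lambda_j^{n_j(p)}$. In the system $\tilde{w}$, the same contribution is $(\sqrt{\lambda_j})^{u_j(p) + d_j(p)} = (\sqrt{\lambda_j})^{2n_j(p)} = \lambda_j^{n_j(p)}$. Taking the product over $j \ge 1$ and multiplying by the common horizontal-step contribution yields
\begin{equation*}
w_{\Mot}(p) \;=\; \prod_{j\ge 0} b_j^{h_j(p)} \prod_{j \ge 1} \lambda_j^{n_j(p)} \;=\; \tilde{w}_{\Mot}(p),
\end{equation*}
completing the proof.
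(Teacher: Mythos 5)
Your proof is correct and follows essentially the same idea as the paper's, which simply observes that a path returning to its starting level has equally many up and down steps. In fact your version is the more careful one: since the weights are level-dependent, what is really needed is the level-by-level equality $u_j(p)=d_j(p)$, which you establish cleanly via the crossing argument, whereas the paper's one-line proof only asserts the equality of the total counts and leaves the per-level refinement implicit.
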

\begin{proof}
Since the path $p$ always finishes at the same level as it began, the number of up steps and down steps are always equal. Therefore, the weight $w_{\Mot}(p)$ which is the product of the weights of all the steps in the path is the same as $\tilde{w}_{\Mot}(p)$.
\end{proof}

\begin{prop}
\label{prop:ho_mot}
    We have
    \begin{equation*}
        \mathbb{E}[\hat x^n]\vert_{\psi_m} = \sum_{p:m\leadsto m,\,  |p|=n} w^H_{\Mot}(p) =\sigma^H_{n,m,m}.
    \end{equation*}
\end{prop}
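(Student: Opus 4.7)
The plan is to start from the integral definition of $\mathbb{E}[\hat x^n]|_{\psi_m}$, reduce it to a mixed linearization coefficient of the monic Hermite polynomials, and then invoke Viennot's Motzkin path interpretation (Theorem \ref{thm:lin_coef}) to rewrite it as a weighted sum over paths from height $m$ to height $m$.

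First I would write
\begin{equation*}
\mathbb{E}[\hat x^n]\vert_{\psi_m}
= \frac{1}{\sqrt{2\pi}\,m!}\int_{-\infty}^\infty x^n H_m(x)^2\, e^{-x^2/2}\,dx,
\end{equation*}
and use the linearisation expansion
\begin{equation*}
x^n H_m(x) = \sum_{k\ge 0} \sigma^H_{n,m,k}\, H_k(x)
\end{equation*}
coming from \eqref{eq:lin_coef_def}. Substituting this in and applying the orthogonality of the Hermite polynomials with respect to $\LL$, all terms with $k\neq m$ vanish, and the $k=m$ term reduces, using $\LL(H_m^2)/\LL(1)=m!$, to exactly $\sigma^H_{n,m,m}$. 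This establishes the second equality $\mathbb{E}[\hat x^n]|_{\psi_m}=\sigma^H_{n,m,m}$.

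For the first equality, I would apply Theorem \ref{thm:lin_coef} with the Hermite three-term recurrence $b^H_n=0$, $\lambda^H_n=n$. This immediately gives
\begin{equation*}
\sigma^H_{n,m,m} = \sum_{p:m\leadsto m,\,|p|=n} w_{\Mot}(p),
\end{equation*}
where the weights on the Motzkin paths are the \emph{standard} Viennot weights (down step from height $j$ has weight $\lambda^H_j=j$, horizontal steps contribute $0$ and are thus absent, up steps weight $1$). To pass from these weights to the symmetric weights $w^H_{\Mot}$ described just before the proposition (where both up and down steps between heights $j-1$ and $j$ carry weight $\sqrt{j}$, matching the matrix elements in \eqref{eq:single_step}), I would invoke Proposition \ref{prop:two_weights}: since every path $p:m\leadsto m$ has the same number of up and down steps between any pair of adjacent heights, replacing $(1,\lambda^H_j)$ on an up/down pair by $(\sqrt{\lambda^H_j},\sqrt{\lambda^H_j})$ leaves the total weight unchanged.

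There is no serious obstacle; the only point that requires any care is the bookkeeping of the normalising constants in passing from the integral to $\sigma^H_{n,m,m}$, and the observation (already packaged in Proposition \ref{prop:two_weights}) that the symmetrisation of weights preserves the total weight precisely because the path returns to its starting height, so the number of up and down steps across each level are balanced.
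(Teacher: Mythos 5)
Your proof is correct, but it runs in the opposite direction to the paper's main argument. The paper proves the \emph{first} equality directly from the quantum mechanics: it inserts the resolution of the identity $\delta(x_1-x_2)=\sum_{j\ge 0}\psi_j(x_1)\psi_j(x_2)$ between each factor of $x$ in $\int x^n\psi_m^2\,dx$, so that the $n$-fold integral factorises into a product of one-step matrix elements \eqref{eq:single_step}; the sum over the intermediate indices $j_1,\dots,j_{n-1}$ then literally enumerates the Motzkin paths $m\leadsto m$ with the symmetric square-root weights, and the identification with $\sigma^H_{n,m,m}$ via \Cref{thm:lin_coef} comes second. You instead establish $\mathbb{E}[\hat x^n]|_{\psi_m}=\sigma^H_{n,m,m}$ first, by expanding $x^nH_m$ in the Hermite basis and using orthogonality together with $\LL(H_m^2)/\LL(1)=m!$, and only then invoke \Cref{thm:lin_coef} and \Cref{prop:two_weights} to recover the path sum with the modified weights. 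Both routes are sound; the paper in fact remarks at the end of its proof that your route is available. What the paper's derivation buys is the physical content of the proposition --- the Motzkin paths emerge as sequences of allowed transitions between energy eigenstates, which is what motivates the symmetric weights and the subsequent ladder-operator discussion --- whereas your argument is shorter and makes the normalisation bookkeeping transparent. One small point in your favour: you state the balance condition in \Cref{prop:two_weights} in its sharp form (equal numbers of up and down steps \emph{across each level}, not merely in total), which is the fact actually needed for the weight equivalence.
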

\begin{proof}
Using the identity
\begin{equation*}
\delta(x_1-x_2) = \sum_{j\geq 0}\psi_j(x_1)\psi_j(x_2),
\end{equation*}
the moments are given by
\begin{align*}
&\mathbb{E}[\hat x^n]\vert_{\psi_m}\\
=& \int_{-\infty}^\infty x^n \psi^2_m(x) dx\\
 =&  \int_{(-\infty,\infty)^n} \psi_m(x_1) x_1 \cdots x_n \psi_m(x_n)\prod_{j=1}^{n-1}\delta(x_j-x_{j+1}) dx_1\cdots dx_n\\
 =& \sum_{j_1,\cdots,j_{n-1}\geq 0} \int_{-\infty}^\infty\psi_m(x_1)x_1\psi_{j_1}(x_1)dx_1\left(\prod_{i=2}^{n-1}\int_{-\infty}^\infty\psi_{j_{i-1}}(x_i)x_i\psi_{j_i}(x_i) dx_i\right)\int_{-\infty}^\infty\psi_{j_{n-1}}(x_n)x_n\psi_{m}(x_n)dx_n\\
 =&\sum_{j_1,\cdots,j_{n-1}\geq 0}(\sqrt{m+1}\delta_{j_1,m+1}+\sqrt{m}\delta_{j_1,m-1})\prod_{i=2}^{n-1}(\sqrt{j_{i-1}+1}\delta_{j_{i},j_{i-1}+1}+\sqrt{j_{i-1}}\delta_{j_i,j_{i-1}-1})\\
 &\hspace{6.5cm}\times (\sqrt{m+1}\delta_{j_{n-1},m+1}+\sqrt{m}\delta_{j_{n-1},m-1}),
\end{align*}
which generates all Motzkin paths with up and down steps that start and end at level $m$, and the weight of each step between the levels $j$ and $j+1$ equal to $\sqrt{j+1}$. This proves the first equality. The second equality can be readily seen using \Cref{thm:lin_coef}. Since $\mathbb{E}[\hat x^n]\vert_{\psi_m}=\frac{1}{\sqrt{2\pi}m!}\int_{-\infty}^\infty x^n H_m^2(x) e^{-\frac{x^2}{2}} dx$, using \Cref{thm:lin_coef} it can also be shown directly that $\mathbb{E}[\hat x^n]\vert_{\psi_m} = \sigma^H_{n,m,m}$.
\end{proof}

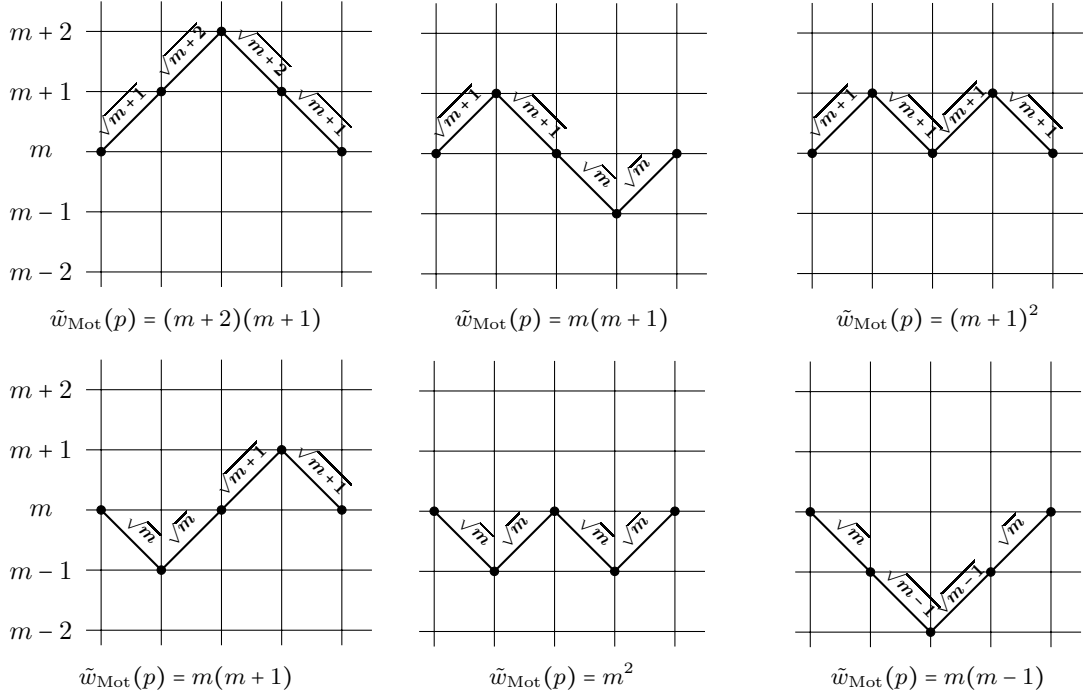
\begin{figure}[ht] 
    \begin{subfigure}[b]{0.33\linewidth}
    \centering
    \begin{tikzpicture}[scale = 0.8]
    \draw(-0.25,-0.25) grid (4.5,4.5);
    \node at (-1,4) {$m+2$};
    \node at (-1,3) {$m+1$};
    \node at (-1,2) {$m$};
    \node at (-1,1) {$m-1$};
    \node at (-1,0) {$m-2$};
    \draw [thick] (0,2) -- (1,3) -- (2,4) -- (3,3) -- (4,2);
    \filldraw[black](0,2)circle[radius=2pt];
    \filldraw[black](1,3)circle[radius=2pt];
    \filldraw[black](2,4)circle[radius=2pt];
    \filldraw[black](3,3)circle[radius=2pt];
    \filldraw[black](4,2)circle[radius=2pt];
    \node [anchor=south,rotate=45,scale=0.7] at (0.5,2.5) {$\bm{\sqrt{m+1}}$};
    \node [anchor=south,rotate=45,scale=0.7] at (1.5,3.5) {$\bm{\sqrt{m+2}}$};
    \node [anchor=south,rotate=-45,scale=0.7] at (2.5,3.5) {$\bm{\sqrt{m+2}}$};
    \node [anchor=south,rotate=-45,scale=0.7] at (3.5,2.5) {$\bm{\sqrt{m+1}}$};
    \end{tikzpicture}
    \captionsetup{labelformat=empty}
    \caption{$\tilde{w}_{\Mot}(p) =(m+2)(m+1)$}
    \end{subfigure}\hfill
    \begin{subfigure}[b]{0.33\linewidth}
    \centering
        \begin{tikzpicture}[scale = 0.8]
    \draw(-0.25,-0.25) grid (4.5,4.5);
    \draw [thick] (0,2) -- (1,3) -- (2,2) -- (3,1) -- (4,2);
    \filldraw[black](0,2)circle[radius=2pt];
    \filldraw[black](1,3)circle[radius=2pt];
    \filldraw[black](2,2)circle[radius=2pt];
    \filldraw[black](3,1)circle[radius=2pt];
    \filldraw[black](4,2)circle[radius=2pt];
    \node [anchor=south,rotate=45,scale=0.7] at (0.5,2.5) {$\bm{\sqrt{m+1}}$};
    \node [anchor=south,rotate=-45,scale=0.7] at (1.5,2.5) {$\bm{\sqrt{m+1}}$};
    \node [anchor=south,rotate=-45,scale=0.7] at (2.5,1.5) {$\bm{\sqrt{m}}$};
    \node [anchor=south,rotate=45,scale=0.7] at (3.5,1.5) {$\bm{\sqrt{m}}$};
    \end{tikzpicture}
   \captionsetup{labelformat=empty}
    \caption{$\tilde{w}_{\Mot}(p)=m(m+1)$}
    \end{subfigure}\hfill
    \begin{subfigure}[b]{0.33\linewidth}
    \centering
    \begin{tikzpicture}[scale = 0.8]
    \draw(-0.25,-0.25) grid (4.5,4.5);
    \draw [thick] (0,2) -- (1,3) -- (2,2) -- (3,3) -- (4,2);
    \filldraw[black](0,2)circle[radius=2pt];
    \filldraw[black](1,3)circle[radius=2pt];
    \filldraw[black](2,2)circle[radius=2pt];
    \filldraw[black](3,3)circle[radius=2pt];
    \filldraw[black](4,2)circle[radius=2pt];
    \node [anchor=south,rotate=45,scale=0.7] at (0.5,2.5) {$\bm{\sqrt{m+1}}$};
    \node [anchor=south,rotate=-45,scale=0.7] at (1.5,2.5) {$\bm{\sqrt{m+1}}$};
    \node [anchor=south,rotate=45,scale=0.7] at (2.65,2.55) {$\bm{\sqrt{m+1}}$};
    \node [anchor=south,rotate=-45,scale=0.7] at (3.5,2.5) {$\bm{\sqrt{m+1}}$};
    \end{tikzpicture}
    \captionsetup{labelformat=empty}
    \caption{$\tilde{w}_{\Mot}(p) =(m+1)^2$}
    \end{subfigure}

\bigskip

    \begin{subfigure}[b]{0.33\linewidth}
    \centering    \begin{tikzpicture}[scale = 0.8]
    \draw(-0.25,-0.25) grid (4.5,4.5);
    \node at (-1,4) {$m+2$};
    \node at (-1,3) {$m+1$};
    \node at (-1,2) {$m$};
    \node at (-1,1) {$m-1$};
    \node at (-1,0) {$m-2$};
    \draw [thick] (0,2) -- (1,1) -- (2,2) -- (3,3) -- (4,2);
    \filldraw[black](0,2)circle[radius=2pt];
    \filldraw[black](1,1)circle[radius=2pt];
    \filldraw[black](2,2)circle[radius=2pt];
    \filldraw[black](3,3)circle[radius=2pt];
    \filldraw[black](4,2)circle[radius=2pt];
    \node [anchor=south,rotate=-45,scale=0.7] at (0.5,1.5) {$\bm{\sqrt{m}}$};
    \node [anchor=south,rotate=45,scale=0.7] at (1.5,1.5) {$\bm{\sqrt{m}}$};
    \node [anchor=south,rotate=45,scale=0.7] at (2.5,2.5) {$\bm{\sqrt{m+1}}$};
    \node [anchor=south,rotate=-45,scale=0.7] at (3.5,2.5) {$\bm{\sqrt{m+1}}$};
    \end{tikzpicture}
    \captionsetup{labelformat=empty}
    \caption{$\tilde{w}_{\Mot}(p) =m(m+1)$}
    \end{subfigure}\hfill
    \begin{subfigure}[b]{0.33\linewidth}
    \centering
        \begin{tikzpicture}[scale = 0.8]
    \draw(-0.25,-0.25) grid (4.5,4.5);
    \draw [thick] (0,2) -- (1,1) -- (2,2) -- (3,1) -- (4,2);
    \filldraw[black](0,2)circle[radius=2pt];
    \filldraw[black](1,1)circle[radius=2pt];
    \filldraw[black](2,2)circle[radius=2pt];
    \filldraw[black](3,1)circle[radius=2pt];
    \filldraw[black](4,2)circle[radius=2pt];
    \node [anchor=south,rotate=-45,scale=0.7] at (0.5,1.5) {$\bm{\sqrt{m}}$};
    \node [anchor=south,rotate=45,scale=0.7] at (1.5,1.5) {$\bm{\sqrt{m}}$};
    \node [anchor=south,rotate=-45,scale=0.7] at (2.5,1.5) {$\bm{\sqrt{m}}$};
    \node [anchor=south,rotate=45,scale=0.7] at (3.5,1.5) {$\bm{\sqrt{m}}$};
    \end{tikzpicture}
   \captionsetup{labelformat=empty}
    \caption{$\tilde{w}_{\Mot}(p)=m^2$}
    \end{subfigure}\hfill
    \begin{subfigure}[b]{0.33\linewidth}
    \centering
    \begin{tikzpicture}[scale = 0.8]
    \draw(-0.25,-0.25) grid (4.5,4.5);
    \draw [thick] (0,2) -- (1,1) -- (2,0) -- (3,1) -- (4,2);
    \filldraw[black](0,2)circle[radius=2pt];
    \filldraw[black](1,1)circle[radius=2pt];
    \filldraw[black](2,0)circle[radius=2pt];
    \filldraw[black](3,1)circle[radius=2pt];
    \filldraw[black](4,2)circle[radius=2pt];
    \node [anchor=south,rotate=-45,scale=0.7] at (0.5,1.5) {$\bm{\sqrt{m}}$};
    \node [anchor=south,rotate=-45,scale=0.7] at (1.5,0.5) {$\bm{\sqrt{m-1}}$};
    \node [anchor=south,rotate=45,scale=0.7] at (2.65,0.55) {$\bm{\sqrt{m-1}}$};
    \node [anchor=south,rotate=45,scale=0.7] at (3.5,1.5) {$\bm{\sqrt{m}}$};
    \end{tikzpicture}
    \captionsetup{labelformat=empty}
    \caption{$\tilde{w}_{\Mot}(p) =m(m-1)$}
    \end{subfigure}
  \caption{ The fourth moment $\mathbb{E}[\hat x^4]\vert_{\psi_m}$ as generating function of Motzkin paths with modified weights. The weight of each up and down step is indicated. Each of these six paths corresponds to the expected value of a sequence of ladder operators, namely $\langle m|a^2{a^\dagger}^2|m\rangle$, $\langle m|a^\dagger a^2a^\dagger|m\rangle$, $\langle m|a a^\dagger a a^\dagger|m\rangle$, $\langle m| a {a^\dagger}^2a |m\rangle$, $\langle m|a^\dagger a a^\dagger a|m\rangle$, $\langle m|{a^\dagger}^2 a^2 |m\rangle$. The operators $a^\dagger$ and $a$ acting on $|m\rangle$ correspond to up and down steps at level $m$, respectively.}
  \label{fig:fourth_mom}
\end{figure}

\subsubsection{Ladder operators}
The Motzkin paths can also be interpreted in terms of raising/creation and lowering/annihilation operators of the quantum harmonic oscillator, namely $a^\dagger$ and $a$.  Denote the orthonormal energy eigenstates by $|n\rangle$, $n\in\mathbb{Z}_{\geq 0}$, so that $\hat H |n\rangle = E_n|n\rangle$, where the energy of the $n^{th}$ eigenstate is $E_n =2n+1$ in the units where we set $\hbar /2=\omega=1$. In the chosen units, the position and momentum operators can be expressed in terms of raising and lowering operators as
\begin{equation}
\label{eq:xp_aad}
\hat x = (a+a^\dagger), \quad \hat p = -i(a-a^\dagger).   
\end{equation}
The operators $a$ and $a^\dagger$ acts on the $n^{th}$ energy eigenstate as 
\begin{equation}
\label{eq:a_ad}
    a|n\rangle = \sqrt{n}|n-1\rangle, \quad a^\dagger |n\rangle = \sqrt{n+1}|n+1\rangle,
\end{equation}
and satisfies the commutation relation
\begin{equation}
\label{eq:com}
    [a,a^\dagger] =1.
\end{equation}

In terms of ladder operators, the moments of the position operator are given by
\begin{equation*}
    \mathbb{E}[\hat x^{2n}]\vert_{\psi_m}  = \langle m|(a+a^\dagger)^{2n}|m\rangle.
\end{equation*}
Using \eqref{eq:a_ad} and \eqref{eq:com}, it is clear that the number of $a$'s and $a^\dagger$'s in each term in the expansion of $(a+a^\dagger)^{2n}$ should be equal for the expectation to be non-zero. 
Thus, there are a total of $\frac{2n!}{n!n!}$ Motzkin paths that arise in the $2n^{th}$ moment of the position operator. For example, see \Cref{fig:fourth_mom} indicating all 6 paths that appear in $\mathbb{E}[\hat x^{4}]\vert_{\psi_m}$ in terms of ladder operators.

\subsubsection{Interpretation using lecture hall paths}
Since $b^H_n=0$, the even and odd mixed moments of Hermite polynomials, namely,
\begin{equation*}
   \sigma^H_{2n,2k} = \frac{2n!}{2^{n-k}2k! (n-k)!},\quad \sigma^H_{2n+1,2k+1} = \frac{(2n+1)!}{2^{n-k}(2k+1)! (n-k)!},
\end{equation*}
and the coefficients in the linearisation expansion can be expressed as generating functions of paths in the symmetric lecture hall graph as discussed in \Cref{sec:lhg}.

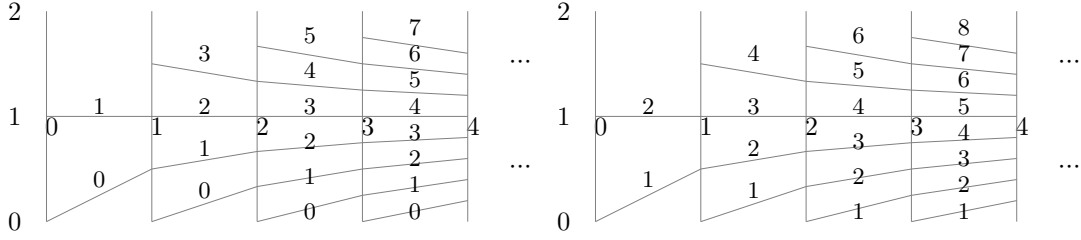
\begin{figure}
  \centering
  \begin{tikzpicture}[scale=1.4]
    \SLHlabel{4}
    \SLH{4}
    \begin{scope}[shift={(0.5,0.1)}]
      \small
      \node at (0,0.3) {\( 0 \)};
      \node at (1,0.2) {\( 0 \)};
      \node at (1,0.6) {\( 1 \)};
      \node at (2,0) {\( 0 \)};
      \node at (2,1/3) {\( 1 \)};
      \node at (2,2/3) {\( 2 \)};
      \node at (3,0) {\( 0 \)};
      \node at (3,1/4) {\(1 \)};
      \node at (3,2/4) {\(2\)};
      \node at (3,3/4) {\(3 \)};
      \node at (0,1) {\( 1 \)};
      \node at (1,1) {\( 2 \)};
      \node at (1,3/2) {\( 3 \)};
      \node at (2,1) {\( 3 \)};
      \node at (2,4/3) {\( 4 \)};
      \node at (2,5/3) {\( 5 \)};
      \node at (3,1) {\( 4 \)};
      \node at (3,5/4) {\(5 \)};
      \node at (3,6/4) {\( 6 \)};
      \node at (3,7/4) {\( 7 \)};
    \end{scope}
    \node at (4.5,0.5) {$\cdots$};
    \node at (4.5,1.5) {$\cdots$};
    \end{tikzpicture}
  \begin{tikzpicture}[scale=1.4]
    \SLHlabel{4}
    \SLH{4}
    \begin{scope}[shift={(0.5,0.1)}]
      \small
      \node at (0,0.3) {\( 1 \)};
      \node at (1,0.2) {\( 1 \)};
      \node at (1,0.6) {\( 2 \)};
      \node at (2,0) {\( 1 \)};
      \node at (2,1/3) {\( 2 \)};
      \node at (2,2/3) {\( 3 \)};
      \node at (3,0) {\( 1 \)};
      \node at (3,1/4) {\(2 \)};
      \node at (3,2/4) {\(3\)};
      \node at (3,3/4) {\(4 \)};
      \node at (0,1) {\( 2 \)};
      \node at (1,1) {\( 3 \)};
      \node at (1,3/2) {\( 4 \)};
      \node at (2,1) {\( 4 \)};
      \node at (2,4/3) {\( 5 \)};
      \node at (2,5/3) {\( 6 \)};
      \node at (3,1) {\( 5 \)};
      \node at (3,5/4) {\(6 \)};
      \node at (3,6/4) {\( 7 \)};
      \node at (3,7/4) {\( 8 \)};
    \end{scope}
    \node at (4.5,0.5) {$\cdots$};
    \node at (4.5,1.5) {$\cdots$};
    \end{tikzpicture}
    \caption{The weights for $\sigma^H_{2n,2k}$ (left) and $\sigma^H_{2n+1,2k+1}$ (right) for Hermite polynomials.}
    \label{fig:her_mixmom}
\end{figure}

\begin{prop}
    The mixed moments of Hermite polynomials can be written as
    \begin{equation*}
        \sigma^H_{2n,2k} = \sum_{p:(k,2)\to (n,0)}w^H_e(p), \quad \sigma^H_{2n+1,2k+1} = \sum_{p:(k,2)\to (n,0)}w^H_o(p),
    \end{equation*}
    where $w^H_e$ and $w^H_o$ are given in \eqref{eq:we_wo_gen} with $\lambda^H_0=0$ and $\lambda^H_j=j$.
\end{prop}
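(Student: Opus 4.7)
The plan is to apply Theorem \ref{thm:dyck_slgh} directly to the sequence of monic Hermite polynomials. I would begin by reading off the coefficients of the three-term recurrence $H_{m+1}(x) = x H_m(x) - m H_{m-1}(x)$, which give $b^H_n = 0$ for all $n \geq 0$, $\lambda^H_0 = 0$, and $\lambda^H_n = n$ for $n \geq 1$. Since $\lambda^H_n \neq 0$ for $n \geq 1$, all of the hypotheses of Theorem \ref{thm:dyck_slgh} are satisfied verbatim, so the two identities it provides specialise to the claims of the proposition as soon as we unwind the definitions of $w^H_e$ and $w^H_o$.

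With the hypotheses confirmed, I would substitute $\lambda^H_j = j$ into the weight formulas \eqref{eq:we_wo_gen}. This produces
\[
w^H_e(t;i,j) = \begin{cases} j, & t=0,\\ i+j+1, & t=1, \end{cases}
\qquad
w^H_o(t;i,j) = \begin{cases} j+1, & t=0,\\ i+j+2, & t=1. \end{cases}
\]
Inserting these weights into the conclusion of Theorem \ref{thm:dyck_slgh} yields the two identities in the proposition. As a visual consistency check, I would verify that the resulting triangular patterns $0,0,1,0,1,2,\dots$ and $1,2,3,3,4,5,\dots$ (respectively $1,1,2,\dots$ and $2,3,4,\dots$) agree with the weights displayed in Figure \ref{fig:her_mixmom}, which they do.

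Since the proposition is a direct specialisation of a theorem proved above, there is no real obstacle to overcome; the entire content is already encoded in Theorem \ref{thm:dyck_slgh}. The only optional sanity check I would perform is to confirm on a small example, e.g.\ comparing the path enumeration for $(n,k)=(1,0)$ or $(2,0)$ against the known closed forms $\sigma^H_{2n,2k} = \tfrac{(2n)!}{2^{n-k}(2k)!(n-k)!}$ and $\sigma^H_{2n+1,2k+1} = \tfrac{(2n+1)!}{2^{n-k}(2k+1)!(n-k)!}$ recalled just above the proposition, that the two evaluations agree.
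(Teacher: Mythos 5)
Your proposal is correct and coincides with the paper's own proof, which likewise obtains the proposition as a direct specialisation of \Cref{thm:dyck_slgh} to the Hermite recurrence data $b^H_n=0$, $\lambda^H_0=0$, $\lambda^H_n=n$. The paper only additionally remarks that the result can alternatively be deduced by combining \Cref{prop:gh_eo} with \Cref{cor:rec}, but the primary argument is identical to yours.
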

\begin{proof}
    The proof follows from \Cref{thm:dyck_slgh} using the bijections in \Cref{fig:bijection}. Alternatively, the results can also be seen directly by combining \Cref{prop:gh_eo} and \Cref{cor:rec}.
\end{proof}

Now, the moments of the position operator are given by
\begin{equation*}
    \mathbb{E}[\hat x^{2n}]\vert_{\psi_{2m}}  = \sigma^H_{2n,2m,2m} = \sum_{p:(m,2)\to (m+n,\frac{2m+1}{m+n+1})}w^H_e(p) = \sum_{p:(m,\frac{2m+1}{m+1})\to (m+n,\frac{2m}{m+n+1})}w^H_o(p).
\end{equation*}
Note that the last $m+1$ and $m$ south steps in $w_e^H$ and $w_o^H$, respectively, are fixed in both cases. One can also utilise the translational invariance of the weighted path to write
\begin{equation*}
    \mathbb{E}[\hat x^{2n}]\vert_{\psi_{2m}}  = \sigma_{2n,2m,2m} = \sum_{p:(m+k,\frac{2m+2}{m+k+1})\to (m+n+k,\frac{2m+1}{m+n+k+1})}w^H_e(p) = \sum_{p:(m+k,\frac{2m+1}{m+k+1})\to (m+n+k,\frac{2m}{m+n+k+1})}w^H_o(p),
\end{equation*}
where $k\geq 0$. 
Similarly,
\begin{equation*}
    \mathbb{E}[\hat x^{2n}]\vert_{\psi_{2m+1}}  = \sigma_{2n,2m+1,2m+1} = \sum_{p:(m+1,\frac{2m+3}{m+2})\to (m+n+1,\frac{2m+2}{m+n+2})}w^H_e(p) = \sum_{p:(m,2)\to (m+n,\frac{2m+1}{m+n+1})}w^H_o(p),
\end{equation*}
and for any $k\geq 0$,
\begin{equation*}
    \mathbb{E}[\hat x^{2n}]\vert_{\psi_{2m}}  = \sigma_{2n,2m+1,2m+1} = \sum_{p:(m+k+1,\frac{2m+3}{m+k+2})\to (m+n+k+1,\frac{2m+2}{m+n+k+2})}w^H_e(p) = \sum_{p:(m+k,\frac{2m+2}{m+k+1})\to (m+n+k,\frac{2m+1}{m+n+k+1})}w^H_o(p).
\end{equation*}

\section{Hydrogen atom}
\label{sec:hatom}
The total wavefunction of the hydrogen atom is a product of the radial wave function and the angular wave function. The radial wave function, denoted as $R(r)$, represents the probability density of finding the electron at a particular radius $r$ from the nucleus. The radial wavefunction with orbital angular momentum quantum number $l$ and energy level $n$ is given by
\begin{align}
    R_{nl}(r) &= (-1)^{n-l-1}\sqrt{\left(\frac{2}{na_0}\right)^3\frac{1}{2n(n+l)!(n-l-1)!}}\left(\frac{2r}{na_0}\right)^{l}e^{-\frac{r}{na_0}} L_{n-l-1}^{(2l+1)}\left(\frac{2r}{na_0}\right). 
\end{align}
Here $a_0$ is the Bohr radius and $L_m^{(\alpha)}(x)$ is the monic associated Laguerre polynomial of degree $m$ orthogonal to the weight $x^\alpha e^{-x}$.  Polynomials $L_m^{(\alpha)}(x)$ satisfy the recurrence relation
\begin{equation*}
    L_{m+1}^{(\alpha)}(x) = (x-b_m^L)L_{m}^{(\alpha)}(x) - \lambda_m^L L_{m-1}^{(\alpha)}(x),
\end{equation*}
where $b_m^L = 2m+\alpha+1$ and $\lambda_m^L = m(m+\alpha)$. The quantum numbers $n$ and $l$ are also referred to as the principal quantum number and the azimuthal quantum number, respectively. The statement of orthonormality for the radial wavefunction is 
\begin{equation*}
    \int_0^\infty R_{nl}(r)R_{n^\prime l}(r)r^2 dr = \delta_{n n^\prime}.
\end{equation*}

By relabelling $x=2r/na_0$, the probability density of the electron with principal quantum number $n$ and azimuthal quantum number $l$ is 
\begin{equation}
    p_{nl}(x) = \frac{1}{2n(n-l-1)!(n+l)!}e^{-x}x^{2l+2}\left(L_{n-l-1}^{(2l+1)}(x)\right)^2.
\end{equation}
Therefore, the moments of the position operator are given by
\begin{equation}
    \mathbb{E}[\hat x^n]\vert_{R_{ml}} = \frac{1}{2m(m-l-1)!(m+l)!}\int_0^\infty x^{n+2l+2} \left(L_{m-l-1}^{(2l+1)}(x)\right)^2 e^{-x}dx. 
\end{equation}

For Laguerre polynomials, we have
\begin{equation*}
    \frac{1}{\left(i!j!\Gamma(i+\alpha+1)\Gamma(j+\alpha+1)\right)^{1/2}}\int_0^\infty x L_i^{(\alpha)}(x) L_j^{(\alpha)}(x) x^\alpha e^{-x} dx = 
    \begin{cases}
    \sqrt{(i+1)(i+\alpha+1)}=\sqrt{\lambda^L_{i+1}}, & j=i+1,\\
    2i+\alpha+1 = b^L_i, & j=i,\\
    \sqrt{i(i+\alpha)} = \sqrt{\lambda^L_i}, &j=i-1,
    \end{cases}
\end{equation*}
which can be used to generate all modified Motzkin paths with the weight of the up and down steps between level $j$ and $j+1$ equal to $\sqrt{\lambda^L_j}$, and the weight of the east step at level $j$ equal to $b_j^L$. Therefore, similar to \Cref{prop:ho_mot}, we have
    \begin{equation*}
        \mathbb{E}[\hat x^n]\vert_{R_{ml}} = \sum_{p:m-l-1\leadsto m-l-1,\,  |p|=n+1} w^L_{\Mot}(p) =\sigma^L_{n+1,m-l-1,m-l-1},
    \end{equation*}
where $w^L_{\Mot}(p)$ is a weight of the Motzkin path, and $\sigma^L_{i,j,k}$ is the linearisation coefficient corresponding to the Laguerre polynomials. 

Since $b_k^L = 2k+\alpha+1$ and $\lambda_k^L=k(k+\alpha)$, using \Cref{thm:mot_slgh} the weight of a path $w^L(p)$ in the symmetric lecture hall graph can be computed by setting $\gamma^L_{2k}=k+\alpha$ and $\gamma^L_{2k+1}=k+1$, see \Cref{fig:lag_slhg}. Therefore, we have
\begin{equation}
    \sigma_{n,k}^L = \binom{n}{k}\frac{\Gamma(n+\alpha+1)}{\Gamma(k+\alpha+1)} = \sum_{p:(k,2)\to (n,0)}w^L(p).
\end{equation}

Similarly, the moments of the position operator of the hydrogen atom can be expressed as
\begin{equation}
    \mathbb{E}[\hat x^n]\vert_{R_{ml}} = \sum_{p:(m-l-1,2)\to (m+n-l,\frac{2m-2l-1}{m+n-l+1})}w^L(p).
\end{equation}

\section*{Acknowledgments}

The authors are grateful to Jang Soo Kim for extremely helpful discussions concerning the symmetric lecture hall graph.

\bibliographystyle{abbrv}
\bibliography{refs}

\end{document}